\let\llncssubparagraph\subparagraph
\let\subparagraph\paragraph
\let\subparagraph\llncssubparagraph
\setlist[description]{leftmargin=\parindent,labelindent=0mm}
\newcommand{\ph}{\ensuremath{\phi}\xspace}
\newcommand{\eps}{\ensuremath{\epsilon}\xspace}
\newcommand{\PPAD}{\ensuremath{\mathtt{PPAD}}\xspace}
\newcommand{\NP}{\ensuremath{\mathtt{NP}}\xspace}
\newcommand{\ETR}{\ensuremath{\mathtt{ETR}}\xspace}
\newcommand{\xcal}{\ensuremath{\mathcal{X}}\xspace}
\newcommand{\xbf}{\ensuremath{\mathbf{x}}\xspace}
\newcommand{\sbf}{\ensuremath{\mathbf{s}}\xspace}
\newcommand{\sbfne}{\ensuremath{\mathbf{s}_\text{NE}}\xspace}
\newcommand{\tbf}{\ensuremath{\mathbf{t}^\text{NE}}\xspace}
\newcommand{\pbf}{\ensuremath{\mathbf{p}}\xspace}
\newcommand{\payv}{\ensuremath{\pbf}\xspace}
\newcommand{\supp}{\mathrm{supp}}
\newcommand{\tilg}{\ensuremath{\widetilde{G}}\xspace}
\newcounter{probc}
\newcommand{\probc}[1]{\refstepcounter{probc}\label{#1}}
\DeclareMathOperator{\addf}{add}
\DeclareMathOperator{\mergef}{merge}
\def\dosth#1{\ifx###1##\else\dofirst#1\anytoken\fi}
\def\doagain#1\anytoken{\dosth{#1}}
\def\payoffpairs#1#2#3{\m=#1\multiply\m by 4 \advance\m by -1 \n=1
  \def\dofirst##1{\put(\n,-\m){\makebox(0,0){\strut##1}}\advance\n by 4 \doagain}%
  \dosth{#2\strut}%
  \m=#1\multiply\m by 4 \advance\m by -3 \n=3 \dosth{#3\strut}}
\def\singlepayoffs#1#2{\m=#1\multiply\m by 4 \advance\m by -2 \n=2
  \def\dofirst##1{\put(\n,-\m){\makebox(0,0){\strut##1}}\advance\n by 4 \doagain}%
  {\large\dosth{#2\strut}}}
\newcommand{\bimatrixgame}[8]{%
\setlength{\unitlength}{#1}%
\newcount\rows
\newcount\cols
\rows=#2
\cols=#3
\newcount\rowcoord
\newcount\colcoord
\rowcoord=\rows
\colcoord=\cols
\multiply\rowcoord by 4
\multiply\colcoord by 4
\newcount\m
\newcount\n
\m=\rowcoord
\n=\colcoord
\advance\m by 2 
\advance\n by 2 
\begin{picture}(\n,\m)(-2,-\rowcoord)
\m=\rows
\n=\cols
\advance\m by 1
\advance\n by 1 
\thinlines
\multiput(0,0)(0,-4){\m}{\line(1,0){\colcoord}}
\multiput(0,0)(4,0){\n}{\line(0,-1){\rowcoord}}
\put(0,0){\line(-1,1){2}}
\put(-1.5,0.5){\makebox(0,0)[r]{#4}}  
\put(-.7,2.0){\makebox(0,0)[l]{#5}}   
\n=2
\def\dofirst##1{\put(-0.8,-\n){\makebox(0,0)[r]{\strut##1}}\advance\n by 4
   \doagain}
\dosth{#6\strut} 
\n=2
\def\dofirst##1{\put(\n,1.0){\makebox(0,0){\strut##1}}\advance\n by 4
   \doagain}
\dosth{#7\strut}#8%
\end{picture}}
\begin{document}

\title{Computing Constrained Approximate Equilibria in Polymatrix Games}

\author{
Argyrios Deligkas\inst{1} \and
John Fearnley\inst{2} \and
Rahul Savani\inst{2}
}

\institute{Technion, Israel \and University of Liverpool, UK}

\maketitle
\begin{abstract}


	This paper is about computing \emph{constrained} approximate Nash equilibria in polymatrix games,
	which are succinctly represented many-player games defined by an interaction
	graph between the players.  In a recent breakthrough, Rubinstein showed that
	there exists a small constant~$\eps$, such that it is \PPAD-complete to find
	an (unconstrained) $\eps$-Nash equilibrium of a polymatrix game.  
	In the first part of the paper, we show that is \NP-hard to decide if a 
	polymatrix game has a constrained approximate equilibrium for 9 natural constraints
	and \emph{any} non-trivial approximation guarantee.
	These results hold even for planar bipartite polymatrix games with degree
	3 and at most 7 strategies per player, and all non-trivial approximation
	guarantees.  
	These results
	stand in contrast to similar results for bimatrix games, which obviously need
	 a non-constant number of actions, and which rely on
	stronger complexity-theoretic conjectures such as the exponential time
	hypothesis. 
	In the second part, we provide 
	a deterministic QPTAS for interaction graphs with bounded treewidth and with
	logarithmically many actions per player that can compute constrained approximate
	equilibria for a wide family of constraints that cover many of the constraints dealt with
	in the first part.

\end{abstract}

\section{Introduction}
\label{sec:intro}

In this paper we study \emph{polymatrix games}, which provide a succinct
representation of a many-player game. In these games, each
player is a vertex in a graph, and each edge of the graph is a bimatrix game.
Every player chooses a single
strategy and plays it in \emph{all} of the bimatrix games that he is
involved in, and his payoff is the \emph{sum} of the payoffs that he obtains
from each individual edge game. 



A fundamental problem in algorithmic game theory is to design efficient
algorithms for computing \emph{Nash equilibria}. Unfortunately, even in 
bimatrix games, this is~\PPAD-complete~\cite{CDT09,DGP09}, 
which probably rules out efficient algorithms.
Thus, attention has shifted to
\emph{approximate} equilibria. 
There are two natural notions of an approximate equilibrium. An \emph{$\epsilon$-Nash equilibrium}
($\epsilon$-NE) requires that each player has an expected payoff that is within
$\epsilon$ of their best response payoff. An \emph{$\epsilon$-well-supported
Nash equilibrium} ($\epsilon$-WSNE) requires that all players only play pure
strategies whose payoff is within $\epsilon$ of the best response payoff. 

\paragraph{\bf Constrained approximate equilibria.}

Sometimes, it is not enough to find an approximate NE,
but instead we want to find one that satisfies
certain constraints, such as having high social welfare. For bimatrix games, the
algorithm of 
Lipton, Markakis, and Mehta (henceforth LMM)
can be adapted to provide a quasi-polynomial time
approximation scheme (QPTAS) for this task~\cite{LMM03}: we can find
in $m^{O(\frac{\ln m}{\epsilon^2})}$ time
 an $\eps$-NE whose
social welfare is at least as good as any $\eps'$-NE where $\eps'<\eps$.

A sequence of papers~\cite{HK11,ABC13,BKW15,DFS16} has shown
that polynomial time algorithms for finding $\eps$-NEs with good social welfare
are unlikely to exist, subject to various hardness assumptions such as ETH.
These hardness results carry over to a range of
other properties, and apply for all $\eps < \frac{1}{8}$~\cite{DFS16}.

\paragraph{\bf Our contribution.} 
We show that
deciding whether there is an $\eps$-NE with good social welfare in a polymatrix
game is \NP-complete for all $\eps \in [0, 1]$. We then study a variety of
further constraints (Table~\ref{tab:problems}). For each
one, we show that deciding whether there is an $\eps$-WSNE that satisfies
the constraint is \NP-complete for all $\eps \in (0,1)$.
Our results hold even when the game is a planar
bipartite graph with degree at most~3, where each player has at most 7
actions.

To put these results into context, let us contrast them with the known
lower bounds for bimatrix games, which also apply directly to polymatrix games.
Those
results~\cite{HK11,ABC13,BKW15,DFS16} imply that one cannot hope to find an
algorithm that is better than a QPTAS for polymatrix games when $\eps <
\frac{1}{8}$. In comparison, our results show a stronger \NP-hardness result,
apply to all  $\eps$ in the range $(0, 1)$, and hold
even when the players have constantly many actions.

We then study the problem of computing constrained approximate equilibria in
polymatrix games with restricted graphs. Although our hardness results
apply to a broad class of graphs, 
bounded treewidth graphs do not fall within their
scope. A recent result of Ortiz and Irfan~\cite{OI16} has provided a
QPTAS for finding $\eps$-NEs in polymatrix games with bounded treewidth where
every player has at most logarithmically many actions.
We devise a dynamic programming algorithm for finding approximate equilibria in
polymatrix games with bounded treewidth. 
Much like the algorithm in~\cite{OI16}, we discretize both the strategy and payoff
spaces, and obtain a complexity result that matches theirs. However, our
algorithm works directly on the game, avoiding the reduction to a CSP used in their result.

The main benefit is that this algorithm can be adapted to provide a QPTAS
for constrained
approximate Nash equilibria. We introduce \emph{one variable
decomposable (OVD)} constraints, which are a broad class of optimization constraints,
including many of the problems listed in Table~\ref{tab:problems}. We show
that our algorithm can be adapted to find good approximate equilibria relative
to an OVD constraint. Initially, we do this for the restricted class of
\emph{$k$-uniform} strategies: we can find a $k$-uniform $1.5\eps$-NE whose
value is better than any $k$-uniform $\eps/4$-NE. Note
that this is similar to the guarantee given by the LMM technique in bimatrix
games.
We extend this beyond the class of
$k$-uniform strategies for constraints that are defined by a linear combination
of the payoffs, such as social welfare. In this case, we find a $1.5\eps$-NE
whose value is within $O(\eps)$ of \emph{any} $\eps/8$-NE.



\paragraph{\bf Related work.}
%

Barman, Ligett and Piliouras~\cite{BLP15} have provided a randomised QPTAS for
polymatrix games played on trees. Their algorithm is also a dynamic programming
algorithm that discretizes the strategy space using the notion of a $k$-uniform
strategy. Their algorithm is a QPTAS for general polymatrix games on trees and
when the number of pure strategies for every player is bounded by a constant
they get an expected polynomial-time algorithm (EPTAS). 

The work of Ortiz and Irfan~\cite{OI16} applies to a much wider class of games
that they call graphical multi-hypermatrix games. They provide a QPTAS for the
case where the interaction hypergraph has bounded hypertreewidth. This class
includes polymatrix games that have bounded treewdith and logarithmically many
actions per player. For the special cases of tree polymatrix games and tree
graphical games they go further and provide explicit dynamic programming
algorithms that work directly on the game, and avoid the need to solve a
CSP.

Gilboa and Zemel~\cite{GZ89} showed that it is \NP-complete to decide whether there exist
Nash equilibria in bimatrix games with some properties, such as high
social welfare. Conitzer and Sandholm~\cite{CS08} extended the list of
\NP-complete problems of~\cite{GZ89}. Recently, Garg et al.~\cite{GMVY} and Bilo and
Mavronicolas~\cite{BM16,BM17} extended these results to many-player games and
provided analogous \ETR-completeness results.


Computing approximate equilibria in
bimatrix games has been well
studied~\cite{BBM10,CDFFJS,DMP07,DMP09,FGSS12,KS10,TS08}, but there has been
less work for polymatrix games~\cite{polymatrix,DFIS,BL15}.
Rubinstein~\cite{R15} has shown that there is a small
constant $\eps$ such that finding an $\eps$-NE of a polymatrix game is
\PPAD-complete. 
For constrained \eps-NE, the only positive results were for
bimatrix games and gave algorithms
for finding $\eps$-NE with constraints on payoffs~\cite{CFJ15, CFJ16}.

\section{Preliminaries}

We start by fixing some notation. We use $[k]$ to denote the set of integers
$\{1, 2, \ldots, k\}$, and when a universe $[k]$ is clear, we will use $\bar{S}
= \{ i \in [k], i \notin S\}$ to denote the complement of $S \subseteq [k]$.
For a $k$-dimensional vector $x$, we use $x_{-S}$ to denote the elements of $x$
with indices~$\bar{S}$, and in the case where $S = \{i\}$ has only one
element, we simply write $x_{-i}$ for $x_{-S}$.

An $n$-player polymatrix game is defined by an
undirected graph $G=(V, E)$ with $n$ vertices, where each vertex is
a player. The edges of the graph specify which players interact with each other.
For each $i \in [n]$, we use $N(i) = \{j \; : \; (i,j) \in E\}$ to
denote the neighbors of player $i$.
Each edge $(i, j) \in E$ specifies a bimatrix game to be played between
players $i$ and $j$. Each player $i \in [n]$ has a fixed number of pure
strategies $m$, so the bimatrix game on edge $(i, j) \in E$ is
specified by an $m \times m$ matrix $A_{ij}$, which gives the payoffs for player
$i$, and an $m \times m$ matrix $A_{ji}$, which gives the payoffs for player
$j$. We allow the individual payoffs in each matrix to be an arbitrary rational number. We make the standard normalisation assumption that
the maximum payoff each player can obtain under any strategy profile is 1 and
the minimum is zero, unless specified otherwise. This can be achieved for
example by using the procedure described in~\cite{polymatrix}.
A \emph{subgame} of a polymatrix game is obtained by restricting ignoring edges that
are not contained within a given subgraph of the game's interaction graph~$G$.

A \emph{mixed strategy} for player $i$ is a probability distribution over player 
$i$'s pure strategies. 
A \emph{strategy profile} specifies a mixed strategy for every player.
Given a 
strategy profile $\sbf = (s_1,\ldots,s_n)$, the pure 
strategy payoffs, or the payoff vector, of player $i$ under $\sbf$, where only 
$\sbf_{-i}$ is relevant, is the sum of the pure strategy payoffs that he obtains
in each of the bimatrix games that he plays. Formally, we define: 
%
$\pbf_i(\sbf) := \sum_{j \in N(i)}A_{ij}s_j$.
The \emph{expected} payoff of player $i$ under the strategy profile \sbf is 
defined as $s_i \cdot \pbf_i(\sbf)$. 
The \emph{regret} of player $i$ under \sbf the is difference between $i$'s best
response payoff against $\sbf_{-i}$ and between $i$'s payoff under \sbf.
If a strategy has regret \eps, we say that the strategy is an \eps-best response.
A strategy profile~\sbf is an \eps-Nash 
equilibrium, or \eps-NE, if no player can increase his utility more than 
\eps by unilaterally from \sbf, i.e., if the regret of every player is at most \eps.
Formally, \sbf is an \eps-NE if for every player
 $i \in [n]$ it holds that $s_i \cdot \pbf_i(\sbf) \geq \max \pbf_i(\sbf) - \eps$.
A strategy profile~\sbf is an \eps-well-supported Nash 
equilibrium, or \eps-WSNE, if 
if the regret of every pure strategy played with positive probability is at most \eps.
Formally, \sbf is an \eps-WSNE if for every player~$i \in [n]$ 
it holds that for all $j \in \supp(s_i) = \{k \in [m] \ |\ (s_i)_k > 0\}$ we 
have $(\pbf_i(\sbf))_j \geq \max \pbf_i(\sbf) - \eps$.




\section{Decision problems for approximate equilibria}

In this section, we show \NP-completeness for nine decision problems related to 
constrained approximate Nash equilibria in polymatrix games. 
Table~\ref{tab:problems} contains the list of the problems we study\footnote{
Given probability distributions $\xbf$ and $\xbf'$, the \emph{TV} distance
between them is $\max_i \{ |\xbf_i - \xbf'_i| \}$. The TV distance between strategy profiles 
$\sbf=(s_1, \ldots, s_n)$ and 
$\sbf'=(s'_1, \ldots, s'_n)$ is 
the maximum TV distance of $s_i$ and $s'_i$ over all $i$.}.
For Problem 1, we show hardness for all $\eps \in [0,1]$. For 
the remaining problems, we show hardness for all $\eps \in (0,1)$, i.e., for all approximate equilibria 
except exact equilibria ($\eps=0$), and trivial approximations ($\eps=1$).
All of these problems are contained in \NP because
a ``Yes'' instance
can be witnessed by a suitable approximate equilibrium (or two in the case 
of Problem~\ref{probc:difcomp}).
The starting point for all of our hardness results is 
 the \NP-complete problem Monotone 1-in-3 SAT.

\probc{probc:largetp}
\probc{probc:smalltp}
\probc{probc:smallp}
\probc{probc:restrictedweak}
\probc{probc:difcomp}
\probc{probc:maxprob}
\probc{probc:totalmaxsupport}
\probc{probc:minmaxsupport}
\probc{probc:maxsupport}

\renewcommand*{\arraystretch}{1.6}
\setlength{\tabcolsep}{7pt}
\begin{table}[tpb!]
\centering
\resizebox{\textwidth}{!}{
\begin{tabular}{|p{0.45\textwidth}|p{0.55\textwidth}|}
\hline
\textbf{Problem description} & \textbf{Problem definition} \\
\hline
\hline
Problem~\ref{probc:largetp}: Large total payoff $u \in (0,n]$
&
Is there an \eps-NE \sbf such that $\sum_{i \in [n]} \pbf_i(\sbf) \geq u$?\\
\hline
Problem~\ref{probc:smalltp}: Small total payoff $u \in [0,n)$
&
Is there an \eps-WSNE \sbf such that $\sum_{i \in [n]} \pbf_i(\sbf) \leq u$? \\
\hline
Problem~\ref{probc:smallp}: Small payoff $u \in [0,1)$
&
Is there an \eps-WSNE \sbf such that  $\min_i \pbf_i(\sbf) \leq u$?\\
\hline
Problem~\ref{probc:restrictedweak}: Restricted support $S \subset [n]$
&
Is there an \eps-WSNE \sbf with $\supp(s_1) \subseteq S$? \\
\hline
Problem~\ref{probc:difcomp}: 
Two \eps-WSNE $d \in (0,1]$ apart in Total Variation
 (TV) distance
&
Are there two \eps-WSNE with TV distance $\ge d$? \\
\hline
Problem~\ref{probc:maxprob}: Small largest probability $p \in (0,1)$
&
Is there an \eps-WSNE \sbf with $\max_j s_1(j) \leq p$?\\
\hline

Problem~\ref{probc:totalmaxsupport}: Large total support size $k \in [n \cdot m]$
&
Is there an \eps-WSNE \sbf such that $\sum_{i \in [n]} |\supp(s_i)| \geq k$?\\
\hline
Problem~\ref{probc:minmaxsupport}: Large smallest support size $k \in [n]$
&
Is there an \eps-WSNE \sbf such that $\min_i |\supp(s_i)| \geq k$?\\
\hline
Problem~\ref{probc:maxsupport}: Large support size $k \in [n]$ 
&
Is there an \eps-WSNE \sbf such that $|\supp(s_1)| \geq k$?\\
\hline
\end{tabular}
}
\medskip
\caption{
The decision problems that we consider. 
All problems take as input an $n$-player polymatrix game with $m$ actions for 
each player and an $\eps \in [0,1]$.
}
\label{tab:problems} 
\end{table}

\begin{definition}[Monotone 1-in-3 SAT]
Given a monotone boolean CNF formula $\phi$ with exactly 3 distinct variables per clause,
decide if there exists a satisfying assignment in which exactly
one variable in each clause is true. We call such an assignment a 1-in-3 satisfying assignment.
\end{definition}	

\noindent
Every formula~$\phi$, with
variables $V = \{x_1, \ldots, x_n\}$ and clauses $C = \{y_1, \dots, y_m\}$,
can be represented as a bipartite graph between $V$ and $C$, with 
an edge between $x_i$ and
$y_j$ if and only if $x_i$ appears in clause $y_j$. We assume,
without loss of generality, that this graph is connected. 
We say that $\phi$ is \emph{planar} if the corresponding graph is planar.
Recall that a graph is called~\emph{cubic} if the degree of every vertex
is exactly three.
We use the following result of Moore and Robson~\cite{MR01}.

\begin{theorem}[Section 3.1~\cite{MR01}]
\label{thm:mr01}
Monotone 1-in-3 SAT is \NP-complete even when the formula corresponds to a planar cubic graph.
\end{theorem}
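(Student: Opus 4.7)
The plan is to verify membership in \NP and then give a many-one reduction from the unrestricted version of Monotone 1-in-3 SAT, which is \NP-complete by Schaefer's dichotomy theorem. Membership is immediate: a 1-in-3 satisfying assignment can be verified in polynomial time. For hardness, I would take an arbitrary monotone 1-in-3 instance \ph and transform it in two stages into an equivalent instance whose variable–clause incidence graph is planar and cubic.

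Stage 1 (making the graph cubic). The key building block is an \emph{equality gadget}: if a formula contains two clauses $(a,b,c)$ and $(a,b,d)$, then summing the 1-in-3 constraints $a+b+c=1$ and $a+b+d=1$ forces $c=d$ in every satisfying assignment. Using this, for each variable $x$ that appears in $k>3$ clauses I would introduce fresh copies $x_1,\dots,x_k$, substitute one copy per original occurrence, and link consecutive copies through such paired ``equality'' clauses, taking care that each new auxiliary variable itself appears in at most three clauses. Variables (or clauses) whose degree ends up below~$3$ can be brought up to exactly~$3$ by attaching small constant-satisfied padding gadgets (e.g.\ additional variables hard-wired to $0$ via a standard ``false-forcing'' gadget). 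This preserves 1-in-3 satisfiability and yields a bipartite cubic incidence graph.

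Stage 2 (making the graph planar). I would fix an arbitrary drawing of the cubic incidence graph and replace each edge-crossing with a \emph{monotone 1-in-3 crossover gadget}: a small sub-formula on fresh auxiliary variables with four distinguished ``ports'' $a_1,a_2,b_1,b_2$ whose 1-in-3 satisfying assignments enforce $a_1=a_2$ and $b_1=b_2$ while leaving the pairs $(a_\cdot)$ and $(b_\cdot)$ otherwise independent, so that two logical wires can pass through the same planar region without interacting. After inserting gadgets at every crossing I would re-apply Stage~1 locally to repair any vertices whose degree has strayed from~$3$. The main obstacle is the explicit construction of such a crossover gadget within the monotone 1-in-3 regime, since the textbook planar-SAT crossovers rely on negated literals; one has to exhibit a constant-size positive 1-in-3 formula whose 1-in-3 models are precisely the ``parallel pass-through'' configurations, and verify by case analysis that no spurious satisfying assignments exist. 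Once both stages are done the reduction is polynomial and the resulting instance is a planar cubic monotone 1-in-3 formula equivalent to~\ph, completing the hardness proof.
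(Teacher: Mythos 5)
The paper does not prove this statement; it is imported verbatim as a citation to Moore and Robson (Section~3.1 of~\cite{MR01}), so there is no ``paper's own proof'' to compare against. Judged on its own terms, your proposal has the right shape --- reduce from unrestricted Monotone 1-in-3 SAT (correctly justified via Schaefer), then enforce the cubic and planar restrictions by gadgetry --- and the equality gadget $\{(a,b,c),(a,b,d)\}\Rightarrow c=d$ is a sound and standard tool. But what you have written is a plan, not a proof, and you say so yourself: you explicitly flag the monotone, negation-free 1-in-3 crossover gadget as ``the main obstacle'' and then do not construct it. That gadget is precisely the content of this theorem; without an explicit positive 1-in-3 sub-formula with four ports whose 1-in-3 models are exactly the two independent pass-through bits (verified by exhaustive case analysis), the planarity stage does not go through, and the argument is incomplete at its crux.

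A secondary gap is the degree-padding in Stage~1. Your equality chain leaves the chain endpoints and the auxiliary pair $(a,b)$ of each link at degree~$2$, and you wave this away with ``constant-satisfied padding gadgets (e.g.\ additional variables hard-wired to $0$).'' Forcing a value in the purely monotone 1-in-3 setting is itself a nontrivial gadget, and naive attempts (e.g.\ $(a,b,c),(a,b,d),(a,c,d)$, which pins $a=1$, $b=c=d=0$) simply relocate the degree-$2$ vertices rather than eliminate them; one must also respect the global constraint that a cubic instance has $|C|$ divisible by~$3$ in any 1-in-3 model. You would need to exhibit a concrete terminating padding construction (or close the chains into cycles and patch the auxiliaries explicitly) for this stage to stand. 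Finally, inserting crossover gadgets in Stage~2 and then ``re-applying Stage~1 locally'' needs an argument that the repair stays inside a planar region and does not reintroduce crossings; this is plausible but unargued. In short: right strategy, correct starting hardness result, but the two load-bearing gadgets are missing, so the proof does not yet exist.
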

From now on, we assume that $\phi$ is 
a monotone planar cubic formula. 
We say that \ph is a ``Yes'' instance if $\phi$ admits a 1-in-3 satisfying assignment.

\paragraph{\bf Large total payoff for $\eps$-NEs.}
\label{sec:ltp}

Problem~\ref{probc:largetp} asks whether there
exists an $\eps$-NE with high social
welfare. We show that this is \NP-complete for every
$\eps \in [0,1]$, even when the interaction graph for the polymatrix game is
planar, bipartite, cubic, and each player has at most
three pure strategies. 

\paragraph{\bf Construction.}
Given a formula $\phi$, we construct a polymatrix game $G$ 
with $m+n$ players as follows. 
For each variable $x_i$ we create a player $v_i$ and for each clause $y_j$ we 
create a player $c_j$. 
We now use $V$ to denote the set of variable players and $C$ to denote the clause players.
The interaction graph is the bipartite graph between $V$ and $C$ described
above.
%
Each edge game has the same structure. 
Every player in~$V$ has two pure strategies called True and False, while
every player in $C$ has three pure strategies that depend on the three variables
in the clause.
If clause $y_j$ contains variables~$x_i, x_k, x_l$, then 
player $c_j$ has pure strategies $i, k$ and $l$.
The game played between $v_i$ and $c_j$ is shown on the left in
Figure~\ref{fig:combined}.
The bimatrix games for $v_k$ and~$v_l$ are defined analogously.

\begin{figure}[tpb!]
\label{fig:fig1a2}
\begin{subfigure}{0.5\textwidth}
\begin{center}
\bimatrixgame{3.5mm}{3}{2}{Player $c_j$}{Player $v_i$}%
{{$i$}{$k$}{$l$}}%
{{True}{False}}%
{
\payoffpairs{1}{{1}{0}}{{0}{-1}}
\payoffpairs{2}{{$-1$}{$0$}}{{0}{$0$}}
\payoffpairs{3}{{$-1$}{$0$}}{{0}{$0$}}
}
\end{center}
\label{fig:base}
\end{subfigure}
%
\begin{subfigure}{0.5\textwidth}
\begin{center}
\resizebox{0.8\textwidth}{!}{
\bimatrixgame{3.75mm}{4}{3}{Player $c_j$}{Player $v_i$}%
{{$i$}{$k$}{$l$}{Out}}%
{{True}{False}{Out}}%
{
\payoffpairs{1}{{$\kappa$}{$c\cdot\kappa$}{$0$}}{{$\frac{1}{3} - \frac{\eps}{3}$\ }{$0$}{$\frac{1}{3}$}}
	\payoffpairs{2}{{$0$}{$c\cdot\kappa$}{$0$}}{{0}{$\frac{1}{3} - \frac{\eps}{3}$\ }{$\frac{1}{3}$}}
\payoffpairs{3}{{$0$}{$c\cdot\kappa$}{$0$}}{{0}{$\frac{1}{3} - \frac{\eps}{3}$\ }{$\frac{1}{3}$}}
\payoffpairs{4}{{$\frac{1}{3}$}{$\frac{1}{3}$}{$\frac{1}{3}$}}
{{$0$}{$0$}{$\frac{1}{3}$}}
}
}
\end{center}
\label{fig:simple}
\end{subfigure}
\caption{Left: The game between clause player $c_j$ and variable player $v_i$ for Problem~1.
Right: The game between $c_j$ and $v_i$ for
Problems 2--9. }
\label{fig:combined}

\end{figure}

\paragraph{\bf Correctness.}

Observe that the constructed game is not normalised. We prove
our result for every \emph{possible} \eps, and thus in the normalised game the result
will hold for every $\eps \in [0,1]$.
Our goal is to show that, for every possible \eps, there is an $\eps$-NE with
social welfare $m$ if and only if \ph is a ``Yes'' instance.
We begin by showing that if there is a solution for
$\phi$, then there is an \emph{exact} NE for $G$ with social welfare~$m$, and
therefore there is also an $\eps$-NE for every  possible \eps with social welfare
$m$.
We start with a simple observation about the maximum and minimum payoffs that
players can obtain in $G$.
\begin{lemma}
\label{lem:maxpayoffG}
	In $G$, the total payoff for every variable player is at most $0$, and the 
	total payoff for every clause player $c_j$ is at most $1$. Moreover, if~$c_j$
	gets payoff~$1$, then $c_j$ and the variable players connected to $c_j$ play pure 
	strategies.
\end{lemma}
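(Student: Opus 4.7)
The plan is to read off the payoffs directly from the edge-game matrix on the left of Figure~\ref{fig:combined}, treating the two cases (variable players and clause players) separately. Since the interaction graph is cubic and bipartite, each variable player is incident to exactly three clause players and each clause player is incident to exactly three variable players, so all totals are sums of three edge-payoffs.

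For a variable player $v_i$, the column-player entries of every incident edge game are non-positive: they equal $0$ whenever $v_i$ plays True, and equal $-(s_{c_j})_i \leq 0$ whenever $v_i$ plays False against a neighbour $c_j$. Thus $v_i$'s expected payoff on each edge is at most $0$, and summing over the three incident edges gives total payoff at most $0$.

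For a clause player $c_j$ whose clause contains $x_i, x_k, x_l$, I would parametrise $c_j$'s mixed strategy as $(a_i,a_k,a_l)$ with $a_i+a_k+a_l=1$ and write $t_r$ for the probability that $v_r$ plays True, $r\in\{i,k,l\}$. Reading $c_j$'s row-player payoffs off the (analogously defined) three edge games and using $a_i+a_k+a_l=1$, a short calculation collapses $c_j$'s total payoff to
\[
\pbf_{c_j}(\sbf) \;=\; \sum_{r\in\{i,k,l\}} t_r\,(2a_r - 1).
\]
The key observation is that since the $a_r$'s sum to $1$, at most one of them can exceed $1/2$, so at most one coefficient $2a_r - 1$ is positive; optimising over $t_r\in[0,1]$ keeps only that single positive term, which is at most $2\cdot 1 - 1 = 1$. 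Hence $\pbf_{c_j}(\sbf)\leq 1$.

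For the equality clause, I would trace back the maximisation: achieving the value $1$ forces $a_{r^*}=1$ for some $r^*$ (so $c_j$ is pure), which makes the other two coefficients equal to $-1$ and therefore forces $t_{r'}=0$ for $r'\neq r^*$ and $t_{r^*}=1$, i.e., all three connected variable players play pure strategies. The only real step of substance is the payoff simplification to the displayed formula; after that, the ``at most one positive coefficient'' argument is immediate from the simplex constraint, so I do not anticipate a genuine obstacle.
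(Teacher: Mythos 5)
Your proof is correct and follows essentially the same route as the paper: both read the clause player's payoff off the edge games and show that, over the probability simplex, the payoff is at most $1$ with equality forcing pure play by $c_j$ and its three neighbours. The only cosmetic difference is that you collapse the payoff to $\sum_r t_r(2a_r-1)$ and invoke the "at most one coefficient positive" observation, whereas the paper fixes $c_j$'s pure strategy and bounds each of the three expressions $p_r - \sum_{r'\neq r}p_{r'}$ by $1$ directly; the two are dual views of the same elementary calculation.
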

\begin{lemma}
\label{lem:d-poly-compl}
If $\phi$ is a ``Yes'' instance, there is an 
NE for $G$ with social welfare~$m$.
\end{lemma}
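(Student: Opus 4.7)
The plan is to exhibit an explicit strategy profile $\sbf$ derived from any 1-in-3 satisfying assignment of $\phi$, and verify that $\sbf$ is actually an \emph{exact} NE of $G$ with total payoff $m$. The construction is natural: every variable player $v_i$ plays the pure strategy True if $x_i$ is set to True by the assignment, and False otherwise; every clause player $c_j$ plays the pure strategy $i$, where $x_i$ is the \emph{unique} True variable in clause $y_j$ (which exists because we have a 1-in-3 satisfying assignment).

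The verification proceeds in three steps, one per ``type'' of player/deviation. First, for a clause player $c_j$ with True variable $x_i$ and False variables $x_k, x_l$: the edge game with $v_i$ contributes $1$ (row $i$ against column True), while the edge games with $v_k, v_l$ contribute $0$ each (in the analogous edge games, strategy $i$ is an ``other'' row for $c_j$ and $v_k, v_l$ play False, landing on a $0$ entry). Second, deviations of $c_j$ to strategy $k$ (or $l$) change the contribution from $v_i$ to $-1$, from $v_k$ to $0$, and from $v_l$ to $0$, giving $-1 \le 1$, so $c_j$ is best responding. Third, for a variable player $v_i$ set to True, every edge contributes $0$ (the first column of the bimatrix game only pays $v_i$ zero), while deviating to False would yield $-1$ from each of the three clauses that contain $x_i$ (since by construction each such $c_j$ plays $i$, hitting the $(i, \mathrm{False}) = (0,-1)$ entry), using the fact that the interaction graph is cubic; hence True is a strict best response. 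For a variable player $v_i$ set to False, no clause $c_j$ containing $x_i$ plays strategy $i$ (since $x_i$ is False and $c_j$ plays the unique True variable of its clause), so every edge again contributes $0$, and deviating to True also yields $0$ from every edge, so False is a best response (in fact the player is indifferent).

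Combining the three checks shows that $\sbf$ is an exact NE of $G$. The social welfare is then $\sum_{i} 0 + \sum_{j} 1 = m$, because each clause player earns exactly $1$ and each variable player earns exactly $0$, as computed above. An exact NE is of course an $\eps$-NE for every $\eps \ge 0$, so this also handles any normalisation of the game.

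The only delicate point is bookkeeping the ``analogous'' edge games from the clause player's point of view: the game in Figure~\ref{fig:combined} is written with $v_i$ in mind, so when $c_j$ meets $v_k$ the role of the distinguished row shifts from $i$ to $k$, and strategy $i$ becomes one of the two ``other'' rows that pay $-1$ against True and $0$ against False. Once this is unpacked, the calculations are immediate and rely only on the cubic structure of the interaction graph.
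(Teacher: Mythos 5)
Your proof is correct and uses the same strategy profile as the paper (variable players play their truth value, clause players play the index of their unique true variable). The only difference is stylistic: the paper invokes Lemma~\ref{lem:maxpayoffG} to note that the achieved payoffs ($1$ for clause players, $0$ for variable players) are the maxima possible, whereas you verify optimality by explicitly enumerating the available deviations; both are valid, and yours is slightly more self-contained but a bit longer.
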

%


We now prove that if there is a strategy profile of  
$G$ with social welfare~$m$ then~\ph is a ``Yes'' instance.
Clearly, if this statement holds for any strategy profile,
it also holds for all $\eps$-NE for any $\eps$.  
\begin{lemma}
\label{lem:d-poly-sound}
If there is a strategy profile for $G$ with social welfare $m$, then \ph is a ``Yes'' instance.
\end{lemma}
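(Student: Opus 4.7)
The plan is to exploit the maximum-payoff bounds from Lemma~\ref{lem:maxpayoffG} to show that any strategy profile of social welfare $m$ must consist entirely of pure strategies, and then to read off a 1-in-3 satisfying assignment from those pure strategies. Because each variable player earns at most $0$ and each of the $m$ clause players earns at most $1$, reaching the total payoff $m$ forces every clause player $c_j$ to attain the bound $1$ and every variable player to attain $0$. The second conclusion of Lemma~\ref{lem:maxpayoffG} then guarantees that each $c_j$ and its three variable neighbors play pure strategies; since the interaction graph is connected, this propagates to every player in $G$.

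Given this pure profile, I define the assignment $\tau$ by $\tau(x_i) = $ True iff $v_i$ plays the pure strategy True. For each clause $y_j = \{x_i, x_k, x_l\}$, let $i_j \in \{i,k,l\}$ denote the pure strategy chosen by $c_j$. Inspecting the bimatrix game between $c_j$ and $v_{i_j}$, the unique cell of row $i_j$ that gives $c_j$ payoff $1$ is the True column, so $v_{i_j}$ plays True and hence $\tau(x_{i_j}) = $ True. For each of the two remaining variables $x_r$ of $y_j$ (so $r \neq i_j$), in the edge game between $c_j$ and $v_r$ the player $c_j$ plays row $i_j \neq r$, whose entries give $c_j$ payoff $-1$ against True and $0$ against False. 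Since $c_j$'s total payoff is exactly $1$ and already comes entirely from the edge to $v_{i_j}$, each of the other two edges must contribute $0$; this forces $v_r$ to play False and hence $\tau(x_r) = $ False. Consequently every clause of \ph contains exactly one variable set to True under $\tau$, so $\tau$ is a 1-in-3 satisfying assignment.

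The only genuinely delicate step is the payoff bookkeeping on the clause side: one must rule out the possibility that some off-variable edge of $c_j$ contributes $-1$ while another contributes $+1$ to cancel it out. But the per-edge payoffs of $c_j$ are bounded above by the values $1, 0, 0$ on its three incident edges (as in the proof of Lemma~\ref{lem:maxpayoffG}), so there is no headroom to absorb a $-1$ without dropping below total payoff~$1$, and the welfare $m$ is attained edge-by-edge. Consistency of $\tau$ across clauses that share a variable is automatic, since it is read off from each variable player's own pure action rather than from the clause players, so no further argument is needed.
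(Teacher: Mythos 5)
Your proof is correct and follows essentially the same route as the paper's: use Lemma~\ref{lem:maxpayoffG} to force every clause player to attain payoff $1$ and hence every relevant player to play pure, then read off the assignment and argue per clause that the chosen variable is True and the other two are False. You spell out a few steps the paper leaves implicit (the propagation of purity to all players via connectivity, and the edge-by-edge payoff accounting that rules out cancellation), but these are elaborations of the same argument rather than a different approach.
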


Together, Lemma~\ref{lem:d-poly-compl} and Lemma~\ref{lem:d-poly-sound} 
show that for all possible values of $\eps$, it is 
\NP-complete to decide whether there exists an $\eps$-NE for $G$ with social 
welfare $m$. When we normalise the payoffs in $[0,1]$, this holds for all $\eps \in [0,1]$.

\begin{theorem}
\label{thm:sw-hard}
Problem~\ref{probc:largetp} is \NP-complete for all $\eps \in [0,1]$, even for
degree-3 bipartite planar polymatrix games in which each player has 
at most~3 pure strategies.
\end{theorem}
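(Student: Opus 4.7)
The plan is to combine the three pieces already assembled in this section: Theorem~\ref{thm:mr01}, Lemma~\ref{lem:d-poly-compl} (completeness), and Lemma~\ref{lem:d-poly-sound} (soundness), together with a short normalisation argument at the end. Membership in \NP is standard: given an $\eps$-NE represented by rationals of polynomial bit length, both the approximate best-response condition and the social welfare bound are checkable in polynomial time, and a polynomially representable witness exists whenever an $\eps$-NE of the required value does.

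For the \NP-hardness reduction, I would take a monotone planar cubic 3-CNF formula $\phi$ and feed it into the construction of Section~\ref{sec:ltp}, producing the polymatrix game $G$. Because $\phi$ is cubic, every variable appears in exactly three clauses and every clause contains exactly three variables, so the interaction graph of $G$ is bipartite, planar, and 3-regular; variable players have two strategies and clause players have three, which matches the structural restrictions in the theorem statement. The reduction is clearly polynomial time.

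Correctness for each fixed $\eps$ in the unnormalised game then follows immediately from the two lemmas. In the forward direction, Lemma~\ref{lem:d-poly-compl} turns a 1-in-3 satisfying assignment of $\phi$ into an exact NE of $G$ with social welfare $m$, which is in particular an $\eps$-NE of welfare $m$ for every $\eps \geq 0$. In the reverse direction, Lemma~\ref{lem:d-poly-sound} says that any strategy profile of $G$ attaining welfare $m$ decodes back to a 1-in-3 satisfying assignment, and this conclusion is independent of $\eps$, so every $\eps$-NE of welfare $m$ also witnesses a ``Yes'' instance of $\phi$.

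The final step I would handle is the range of $\eps$. The payoffs in Figure~\ref{fig:combined} (left) live in $\{-1,0,1\}$, so the argument above already establishes hardness for every value of $\eps$ in the unnormalised payoff range. Applying the positive affine rescaling of~\cite{polymatrix} to shift payoffs into $[0,1]$ simply rescales $\eps$ by the payoff range, so ``for every possible $\eps$'' in the unnormalised game translates to ``for every $\eps \in [0,1]$'' in the normalised game. The main obstacle in the whole argument actually sits inside Lemma~\ref{lem:d-poly-sound}: one needs Lemma~\ref{lem:maxpayoffG} to force every clause player's payoff to be exactly $1$ in any welfare-$m$ profile, and hence force the clause players and their neighbours to play pure strategies before the combinatorial decoding back to an assignment goes through. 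Once that rigidity is in hand, the reduction to Theorem~\ref{thm:mr01} is routine bookkeeping.
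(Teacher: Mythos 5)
Your proposal is correct and follows essentially the same route as the paper: reduce from planar cubic Monotone 1-in-3 SAT, invoke Lemma~\ref{lem:d-poly-compl} (completeness via an exact NE of welfare $m$) and Lemma~\ref{lem:d-poly-sound} (soundness holding for arbitrary strategy profiles, hence for all $\eps$), and then normalise payoffs into $[0,1]$ to turn ``all possible $\eps$'' into ``all $\eps \in [0,1]$.'' Your extra remarks on where the real rigidity lies (Lemma~\ref{lem:maxpayoffG} forcing purity) and on the structural bounds (degree 3, bipartite, planar, $\le 3$ actions) accurately mirror the paper's argument.
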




\paragraph{\bf Hardness of Problems~{\ref{probc:smalltp}--\ref{probc:maxsupport}}.}
To show the hardness Problems~{\ref{probc:smalltp}--\ref{probc:maxsupport}}, 
we modify the game constructed in the previous section.
We use $G'$ to denote the new polymatrix game. The interaction graph for $G'$
is exactly the same as for the game $G$. 
The bimatrix games are extended by an extra
pure strategy for each player, the strategy Out, and the payoffs are adapted. 
If variable $x_i$ is in clause $y_j$, then 
the bimatrix game between clause player $c_j$ and $v_i$ is shown on the right in
Figure~\ref{fig:combined}.
To fix the constants, given $\eps \in (0,1)$, we choose $c$ to be in the range
$(\max(1-\frac{3\eps}{2},0),1)$, and we set $\kappa = \frac{1-\eps}{1+2c}$.
Observe that $0< c < 1$, and that $\kappa+ 2c\cdot\kappa = 1-\eps$. Furthermore,
since $c > 1-\frac{3\eps}{2}$ we have $0 < \kappa < \frac{1}{3}$. 



 
In the next lemma we prove that if \ph is a ``Yes'' instance, then the strategy
profile in which all players play according to to the assignment is an
$\eps$-WSNE. No player uses the strategy Out in this strategy profile.
%
\begin{lemma}
\label{lem:md-poly-compl}
If \ph is a ``Yes'' instance, then $G'$ possesses an
$\eps$-WSNE such that no player uses strategy Out.
\end{lemma}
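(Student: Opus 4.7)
The plan is to exhibit an explicit $\epsilon$-WSNE of $G'$ directly from a 1-in-3 satisfying assignment $\alpha$ of \ph. Specifically, I will set each variable player $v_i$ to play the pure strategy True if $\alpha(x_i)=1$ and False otherwise, and each clause player $c_j$ (with clause $y_j$ on variables $x_i,x_k,x_l$) to play the pure index strategy in $\{i,k,l\}$ corresponding to the unique variable that $\alpha$ sets to true. By construction, no player uses Out, so the whole content of the lemma is verifying that every player's chosen pure strategy is an \eps-best response.

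The verification reduces to four payoff comparisons, enabled by the fact that the graph is cubic (so each player participates in exactly three edge games). For a variable player $v_i$: whichever of True or False matches $\alpha(x_i)$, each of its three neighboring clause players plays an index strategy whose row in the corresponding bimatrix awards $v_i$ the entry $\frac{1}{3}-\frac{\eps}{3}$ under the played column. Summing across the three edges yields payoff $1-\eps$, while Out would yield $3\cdot\frac{1}{3}=1$ and the opposite truth value would yield $0$. So the played strategy has regret exactly $\eps$.

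For a clause player $c_j$, suppose $x_i$ is the unique true variable in its clause, so $c_j$ plays $i$. On the edge to $v_i$ (True) this awards $\kappa$, and on the edges to $v_k,v_l$ (False) it awards $c\cdot\kappa$ each, totalling $\kappa+2c\kappa=1-\eps$ by the choice $\kappa=(1-\eps)/(1+2c)$. The alternative index strategies give at most $2c\kappa<1-\eps$, while Out gives $1$. Hence the regret of the played strategy is at most $\eps$.

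Since the constructed profile is pure, these regret bounds on the single played strategies are exactly what the \eps-WSNE condition demands, and the lemma follows. The only subtlety is making sure the constants are consistent: this is why the construction requires $c\in(\max(1-\tfrac{3\eps}{2},0),1)$ and $\kappa+2c\kappa=1-\eps$, so that clause players' deviations to alternate indices are dominated by their played index and Out does not beat the played strategy by more than \eps. No separate hard step is anticipated; the proof is a direct case analysis once the profile is written down.
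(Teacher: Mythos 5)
Your proof is correct and matches the paper's approach: both construct the pure strategy profile directly from the 1-in-3 satisfying assignment and verify, via the cubic degree and the constants $\kappa$ and $c$, that each played pure strategy is within $\eps$ of the best-response payoff of $1$ achievable by Out. Your write-up is slightly more explicit in enumerating the alternative deviations (the opposite truth value yielding $0$, alternate index strategies yielding at most $2c\kappa$), but the argument is the same.
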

On the other hand, we can prove that if
\ph is a ``No'' instance, then in every 
$\eps$-WSNE of $G'$ all players play the pure strategy Out.
\begin{lemma}
\label{lem:md-poly-sound}
If \ph is a ``No'' instance, then $G'$ 
possesses a unique $\eps$-WSNE where every player plays Out.
\end{lemma}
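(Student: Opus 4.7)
The plan is to analyse the support structure of any $\eps$-WSNE of $G'$ carefully, showing that a profile in which some player does not play Out purely would induce a 1-in-3 satisfying assignment of \ph, contradicting the hypothesis. The argument proceeds in three stages: establish two local forcing lemmas, rule out mixed supports, and then propagate across the graph.

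First, I would observe that playing Out yields total payoff exactly $1$ against any profile (each of the three incident edges contributes $\tfrac{1}{3}$), so every pure strategy in the support of any $\eps$-WSNE must deliver payoff at least $1 - \eps$. For a variable player $v_i$, a direct calculation shows that the True-payoff contributed by each of its three incident clauses is at most $\tfrac{1}{3} - \tfrac{\eps}{3}$, attained only when that clause points to $v_i$ with probability one; summing, True in $v_i$'s support forces every incident clause to point to $v_i$ purely. A symmetric argument shows that False in $v_i$'s support forces every incident clause to play a non-$v_i$, non-Out strategy purely. For a clause player $c_j$ on variables $v_{i_1},v_{i_2},v_{i_3}$, the payoff for its ``pointer to $v_{i_1}$'' strategy equals $t_1\kappa + (f_1+f_2+f_3)\,c\kappa$, where $t_r,f_r$ are the probabilities that $v_{i_r}$ plays True, False respectively. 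Since $c < 1$ and $\kappa+2c\kappa = 1-\eps$ by construction, this expression attains $1-\eps$ only at $t_1 = 1$ and $f_2 = f_3 = 1$; hence the pointer to $v_{i_1}$ being in $c_j$'s support forces $v_{i_1}$ to play True purely and $v_{i_2},v_{i_3}$ to play False purely.

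Combining these two forcing lemmas eliminates mixed supports. For example, if $v_i$ has both True and Out in its support, then every incident clause plays its $v_i$-pointer purely, which by the clause-side forcing requires $v_i$ to be pure True, contradicting Out being in the support; the remaining cases ($\{\text{True},\text{False}\}$ at a variable, $\{i_1,i_2\}$ or $\{i_1,\text{Out}\}$ at a clause, etc.) are dispatched similarly. Suppose now for contradiction that some player is not Out-pure; by the clause-side forcing we may assume some variable $v_i$ plays True purely. Then every incident clause plays its $v_i$-pointer purely, forcing the other two variables of each such clause to be False-pure; those False-labelled variables in turn determine the strategies of their remaining incident clauses, which point to some new True-labelled variable, and so on. Connectivity and cubicity of the bipartite interaction graph ensure this process labels every variable and that every clause ends up with exactly one True and two False variables, yielding a 1-in-3 satisfying assignment of \ph and contradicting \ph being a ``No'' instance. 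Hence every variable plays Out; then every pointer strategy of every clause has payoff $0$ against the all-Out profile, so every clause also plays Out, and the all-Out profile is easily checked to be an exact Nash equilibrium, giving uniqueness. The main obstacle is carefully handling the mixed-support eliminations, which must exploit the coupling between both sides of the bipartite graph rather than purely local best-response analysis.
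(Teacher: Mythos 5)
Your proof is correct, and it takes a genuinely different route from the paper's. The paper proceeds in two stages: first it argues that if no clause player had Out in its support, then every clause player would have to achieve the maximum non-Out payoff of $\kappa + 2c\kappa = 1-\eps$, forcing a 1-in-3 assignment of \ph and giving a contradiction; hence some clause player must play Out purely. It then propagates ``Out'' outward from that one clause: a clause playing Out caps all adjacent variables' True/False payoff at $\tfrac{2}{3} - \tfrac{2\eps}{3} < 1-\eps$, and a variable playing Out caps all adjacent clauses' pointer payoffs below $1-\eps$, so Out spreads by connectivity.

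You instead propagate the True/False structure rather than the Out label. You establish the variable-side and clause-side ``forcing'' lemmas explicitly, show these rule out mixed supports, and then argue that any player failing to be Out-pure triggers a cascade that assigns a pure True/False strategy to every variable and a pure pointer to every clause, yielding a 1-in-3 satisfying assignment and hence a contradiction. Both approaches are sound. The paper's version is arguably simpler because the Out-propagation step only needs crude payoff upper bounds and doesn't need the full tight characterization of when a pointer achieves $1-\eps$; your version buys a more unified argument (no separate ``some clause plays Out'' step) at the cost of tighter local lemmas and a slightly more delicate propagation. One small presentational point: your final sentence folds existence and uniqueness together awkwardly — the contradiction argument already gives that \emph{every} player (variable and clause alike) is Out-pure in any \eps-WSNE, so it would be cleaner to state uniqueness from that and then separately note that all-Out is an exact NE, which gives existence.
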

The combination of these two properties allows us to show that 
Problems~\ref{probc:smalltp}--\ref{probc:difcomp} are \NP-complete. For example,
for Problem~\ref{probc:restrictedweak}, we can ask whether there is an
$\eps$-WSNE of the game in which player one does not player Out.
\begin{theorem}
\label{thm:norm-NP}
Problems~\ref{probc:smalltp}--\ref{probc:difcomp}
are \NP-complete for all $\eps \in (0,1)$, even on degree-3 planar bipartite
polymatrix games where each player has at most 4 pure strategies.
\end{theorem}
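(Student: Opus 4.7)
Membership in \NP is immediate for all four problems: a candidate $\eps$-WSNE (or a pair, for Problem~\ref{probc:difcomp}) serves as a polynomial-size certificate whose regret bound and side constraint can each be verified in polynomial time. For hardness, I would reduce from Monotone 1-in-3 SAT on planar cubic formulas, which is \NP-complete by Theorem~\ref{thm:mr01}, to each of the four problems, using $G'$ together with Lemmas~\ref{lem:md-poly-compl} and~\ref{lem:md-poly-sound}.

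The backbone of every reduction is a quantitative gap between two $\eps$-WSNE of $G'$. First, I would observe that the all-Out profile is always a strict pure NE of $G'$: inspection of the bimatrix on the right of Figure~\ref{fig:combined} shows that when a neighbour plays Out, every non-Out strategy pays $0$ on that edge while Out pays $\frac{1}{3}$, so summing over the three neighbours every player obtains payoff exactly $1$ under all-Out. By Lemma~\ref{lem:md-poly-compl}, a Yes-instance $\phi$ yields a further $\eps$-WSNE in which no player plays Out, and the 1-in-3 structure pins down the payoffs exactly. If $v_i$ is True, it is the unique True variable of every clause containing it (otherwise two True variables would share a clause, contradicting 1-in-3), so every such $c_j$ picks strategy $i$ and $v_i$ collects $\frac{1}{3}-\frac{\eps}{3}$ per edge; if $v_i$ is False, then some other variable is the True one of each neighbouring clause and $v_i$ still collects $\frac{1}{3}-\frac{\eps}{3}$ per edge. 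Either way $v_i$ totals $1-\eps$, and each clause player $c_j$ collects $\kappa+2c\kappa=1-\eps$ by the choice of $\kappa$. Lemma~\ref{lem:md-poly-sound} then rules out any non-all-Out $\eps$-WSNE when $\phi$ is a No-instance.

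With this gap in hand, I would finish by choosing, for each problem, the side parameter so that all-Out violates the constraint while the no-Out WSNE satisfies it. For Problem~\ref{probc:smalltp}, let $N=n+m$ be the number of players and set $u=N(1-\eps)\in[0,N)$, since welfare is $N$ under all-Out and $N(1-\eps)$ under the no-Out WSNE. For Problem~\ref{probc:smallp}, set $u=1-\eps\in[0,1)$, since the minimum player payoff is $1$ under all-Out and $1-\eps$ under the no-Out WSNE. For Problem~\ref{probc:restrictedweak}, designate a variable player as player~$1$ and let $S=\{\text{True},\text{False}\}$; the all-Out WSNE has $\supp(s_1)=\{\text{Out}\}\not\subseteq S$, while the no-Out WSNE has $s_1\in S$. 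For Problem~\ref{probc:difcomp}, set $d=1$ and exhibit the pair of the all-Out and no-Out WSNE, whose TV distance is $1$ on every coordinate. Lemma~\ref{lem:md-poly-compl} handles the Yes direction in every reduction and Lemma~\ref{lem:md-poly-sound} handles the No direction. The claimed structural properties follow from Theorem~\ref{thm:mr01} together with the construction of $G'$, whose interaction graph is exactly the formula graph of $\phi$ and in which variable players have $3$ strategies and clause players have $4$.

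The main obstacle will be the payoff bookkeeping in the second paragraph: one must verify that every player really obtains payoff exactly $1-\eps$ in the no-Out WSNE, independently of the particular 1-in-3 assignment. This rests on the deliberate symmetry in Figure~\ref{fig:combined} between the entries $\kappa$ (paid to $c_j$ when it is ``matched'' with its True variable) and $\frac{1}{3}-\frac{\eps}{3}$ (paid to $v_i$ both when $c_j$ selects $v_i$ and $v_i$ is True, and when $c_j$ selects a different variable and $v_i$ is False), and on the normalisation of $\kappa$ so that $\kappa+2c\kappa=1-\eps$. Once these numbers are checked, the four reductions essentially just package Lemmas~\ref{lem:md-poly-compl} and~\ref{lem:md-poly-sound} through appropriate parameter choices.
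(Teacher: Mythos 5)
Your proposal is correct and follows essentially the same route as the paper's proof: \NP-membership is immediate, hardness comes from $G'$ via Lemmas~\ref{lem:md-poly-compl} and~\ref{lem:md-poly-sound}, and the four reductions use exactly the same parameter choices ($u=(m+n)(1-\eps)$, $u=1-\eps$, $S=\{\text{True},\text{False}\}$, $d=1$) with the same all-Out versus no-Out gap argument. The only difference is cosmetic: you re-derive the ``every player gets $1-\eps$'' accounting inline, whereas the paper cites it as fact (b) from the proof of Lemma~\ref{lem:md-poly-compl}.
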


\paragraph{\bf Duplicating strategies.}

To show hardness for Problems~\ref{probc:maxprob}--\ref{probc:maxsupport}, we
slightly modify the game $G'$ by duplicating every pure strategy except Out for all of the players. Since each player $c_j \in C$ has the pure
strategies $i, k, l$ and Out, we give player $c_j$ pure strategies $i', k'$ and
$l'$, which each have identical payoffs as the original strategies. Similarly
for each player $v_i \in V$ we add the pure strategies $\text{True}'$ and
$\text{False}'$. Let us denote the game with the duplicated strategies by
$\widetilde{G}$. 
Then, if \ph is a ``Yes'' instance,
we can construct an $\eps$-WSNE in which no player plays
Out, where each player places at most $0.5$ probability on each pure
strategy, and where each player uses a support of size 2. These properties are
sufficient to show that Problems \ref{probc:maxprob}--\ref{probc:maxsupport} are
\NP-complete.

\begin{theorem}
\label{thm:support-NP}
Problems~\ref{probc:maxprob}--\ref{probc:maxsupport}
are \NP-complete for 
all ${\eps \in (0,1)}$, even on degree-3 planar bipartite
polymatrix games where each player has at most 7 pure strategies.
\end{theorem}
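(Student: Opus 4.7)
The plan is to construct the game $\widetilde{G}$ from $G'$ by duplicating every non-Out pure strategy: each variable player acquires additional strategies $\text{True}'$ and $\text{False}'$, and each clause player $c_j$ with clause $\{x_i, x_k, x_l\}$ acquires additional strategies $i', k', l'$. The payoff rows and columns of each duplicate are copied verbatim from its original, so for every edge $(i,j) \in E$ and any original strategies $a,b$ with duplicates $a', b'$, we have $A_{ij}[a,b] = A_{ij}[a',b] = A_{ij}[a,b'] = A_{ij}[a',b']$, and analogously for pairs involving Out. Variable players thus have at most $5$ strategies and clause players at most $7$, matching the claimed bound; the interaction graph is unchanged, so planarity, bipartiteness and degree $\le 3$ are preserved.

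The key step is a correspondence between the two games. Given a profile $\widetilde{\sbf}$ of $\widetilde{G}$, define its \emph{merged} profile $\sbf$ in $G'$ by $s_i(t) := \widetilde{s}_i(t) + \widetilde{s}_i(t')$ for each non-Out strategy $t$, and $s_i(\text{Out}) := \widetilde{s}_i(\text{Out})$. By the payoff-identity for duplicates, the payoff that player $i$ assigns to any pure strategy $a$ (or its duplicate $a'$) in $\widetilde{G}$ against $\widetilde{\sbf}$ equals the payoff assigned to $a$ in $G'$ against $\sbf$, and the maximum pure-strategy payoffs coincide as well. Consequently, if $\widetilde{\sbf}$ is an $\eps$-WSNE of $\widetilde{G}$, then $\sbf$ is an $\eps$-WSNE of $G'$.

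For completeness, if \ph is a ``Yes'' instance then Lemma~\ref{lem:md-poly-compl} gives a pure $\eps$-WSNE $\sbf$ of $G'$ avoiding Out; define $\widetilde{\sbf}$ by splitting each player's mass evenly between their pure strategy in $\sbf$ and its duplicate. The merged profile of $\widetilde{\sbf}$ is exactly $\sbf$, so each pure strategy in $\widetilde{\sbf}$'s support is an $\eps$-best response, making $\widetilde{\sbf}$ an $\eps$-WSNE of $\widetilde{G}$ in which every player has support of size $2$, every probability equals $\tfrac{1}{2}$, no player plays Out, and the total support is $2(n+m)$. For soundness, if \ph is a ``No'' instance then any $\eps$-WSNE $\widetilde{\sbf}$ of $\widetilde{G}$ merges to an $\eps$-WSNE of $G'$, which by Lemma~\ref{lem:md-poly-sound} is the all-Out profile; hence $\widetilde{\sbf}$ itself is the all-Out profile.

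The four hardness results then follow by appropriate parameter choices: any $p \in [\tfrac{1}{2},1)$ for Problem~\ref{probc:maxprob}, $k = 2(n+m)$ for Problem~\ref{probc:totalmaxsupport}, and $k = 2$ for both Problem~\ref{probc:minmaxsupport} and Problem~\ref{probc:maxsupport}; in each case the required condition holds iff \ph is a ``Yes'' instance. The main obstacle is the correspondence lemma between $\widetilde{G}$ and $G'$: one must verify carefully that merging/splitting pure strategies preserves the payoff to every pure strategy \emph{and} the best-response payoff, since the latter is what lets the $\eps$-WSNE property transfer cleanly in both directions and so lets us invoke Lemmas~\ref{lem:md-poly-compl} and~\ref{lem:md-poly-sound} as black boxes.
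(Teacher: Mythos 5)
Your proof is correct and follows the same duplication construction and line of reasoning as the paper's. The merge-correspondence you make explicit (reducing WSNE analysis of $\widetilde{G}$ to WSNE analysis of $G'$ via $s_i(t):=\widetilde{s}_i(t)+\widetilde{s}_i(t')$) is precisely what justifies the paper's only-asserted claim that in the ``No'' case every $\eps$-WSNE of $\widetilde{G}$ is all-Out, so your write-up actually supplies a detail the paper leaves implicit.
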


%

\section{Constrained equilibria in bounded treewidth games}

In this section we show that some constrained equilibrium problems can be solved
in quasi-polynomial time if the input game has bounded treewidth and at most
logarithmically many actions per player. We first present a 
dynamic programming algorithm for finding approximate Nash equilibria in these
games,
and then show that it can be modified to find 
constrained equilibria.


\paragraph{\bf Tree Decompositions.}
A tree decomposition of a graph $G=(V,E)$ is a pair $(\xcal,T)$, where $T=(I,F)$
is a tree and $\xcal = \{X_i| i \in I\}$ is a family of subsets of $V$ such that
(1) $\bigcup_{i \in I} X_i = V$, 
(2) for every edge $(u,v) \in E$ there exists an $i \in I$ such that 
$\{u, v \} \in X_i$, and
(3)
for all $i,j,k \in I$ if $j$ is on the path from $i$ to $k$ in $T$, then
$X_i \cap X_k \subseteq X_j$.
The width of a tree decomposition $(\xcal,T)$ is $\max_i |X_i| - 1$.
The treewidth of a graph is the minimum width over all possible tree 
decompositions of the graph.
In general, computing the treewidth of a graph is \NP-hard, but there are fixed
parameter tractable algorithms for the problem. In particular Bodlaender~\cite{bodlaender96} has
given an algorithm that runs in $O(f(w) \cdot n)$ time, where $w$ is the
treewidth of the graph, and $n$ is the number of nodes. 


\subsection{An algorithm to find approximate Nash equilibria}
\label{sec:algbase}

Let $G$ be a polymatrix game and let $(\xcal,T)$ be a tree decomposition of $G$'s interaction graph.
We assume that an arbitrary node of $T$ has been chosen as the
root. Then, given some node $v$ in $T$, we define $G(X_v)$ to be the subgame
that is obtained when we only consider the players in the subtree of $v$. More
formally, this means that we only include players $i$ that are contained in some
set $X_u$ where $u$ is in the subtree of $v$ in the tree decomposition.
Furthermore, we will use $\tilg(X_v)$ to denote the players in $G(X_v)\setminus
X_v$. For every player $i \in X_v$, we will use $N_i(X_v)$ to denote the
neighbours of $i$ in $\tilg(X_v)$.



\paragraph{\bf $k$-uniform strategies.}

A strategy $s$ is said to be $k$-uniform if there exists a multi-set $S$ of $k$
pure strategies such that $s$ plays a uniformly over the pure strategies in $S$.
These strategies naturally arise when we sample, with replacement, $k$ pure
strategies from a distribution, and play the sampled strategies uniformly. 
%
The following is a theorem of \cite{BBP14}.
\begin{theorem}
\label{thm:small}
Every $n$-player $m$-action game has a $k$-uniform $\eps$-NE whenever 
$k \ge 8 \cdot \frac{\ln m + \ln n - \ln \eps + \ln 8}{\eps^2}$.
\end{theorem}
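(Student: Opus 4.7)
The plan is to apply the probabilistic method in the style of Lipton--Markakis--Mehta. Fix an exact Nash equilibrium $\sbf=(s_1,\dots,s_n)$ (which exists by Nash's theorem). For each player $i$ independently, draw $k$ pure strategies $X_i^{(1)},\dots,X_i^{(k)}$ i.i.d.\ from $s_i$ and let $\hat{s}_i$ be the empirical (uniform) distribution over this multi-set; the resulting profile $\hat{\sbf}=(\hat{s}_1,\dots,\hat{s}_n)$ is $k$-uniform by construction. It suffices to show that, with strictly positive probability over the sampling, $\hat{\sbf}$ is an $\eps$-NE.

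The heart of the proof is concentration of the pure-strategy payoff vectors around their targets. I would couple the samples across players by drawing $k$ i.i.d.\ rounds $(X_1^{(a)},\dots,X_n^{(a)})\sim s_1\otimes\cdots\otimes s_n$, so that $\hat{s}_\ell$ is the empirical distribution of $X_\ell^{(1)},\dots,X_\ell^{(k)}$; this leaves the marginal law of each $\hat{s}_\ell$ unchanged. Because payoffs in a polymatrix game are additive over edges, for every player $i$ and every pure action $j\in[m]$
\[
(\pbf_i(\hat{\sbf}))_j \;=\; \sum_{\ell\in N(i)} (A_{i\ell}\,\hat{s}_\ell)_j \;=\; \frac{1}{k}\sum_{a=1}^{k} Y_a,\qquad Y_a \;:=\; \sum_{\ell\in N(i)}(A_{i\ell})_{j,\,X_\ell^{(a)}},
\]
so $(\pbf_i(\hat{\sbf}))_j$ is an average of $k$ i.i.d.\ random variables bounded in $[0,1]$ with expectation exactly $(\pbf_i(\sbf))_j$. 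Hoeffding's inequality then gives $\Pr\bigl[\,\bigl|(\pbf_i(\hat{\sbf}))_j-(\pbf_i(\sbf))_j\bigr|\ge \eps/4\,\bigr]\le 2\exp(-k\eps^2/8)$, and a union bound over the $nm$ pairs $(i,j)$ combined with the assumed lower bound on $k$ forces the total failure probability strictly below~$1$.

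It remains to check that the good event ``all $nm$ payoff entries are within $\eps/4$ of their targets'' guarantees that $\hat{\sbf}$ is an $\eps$-NE. Since $\hat{s}_i$ is supported inside $\supp(s_i)$ and $\sbf$ is an exact NE, every pure strategy in $\supp(\hat{s}_i)$ is a best response against $\sbf_{-i}$, so $\hat{s}_i\cdot\pbf_i(\sbf)=\max_j(\pbf_i(\sbf))_j$. Combining this identity with the pointwise concentration yields $\max_j(\pbf_i(\hat{\sbf}))_j\le \max_j(\pbf_i(\sbf))_j+\eps/4$ and $\hat{s}_i\cdot\pbf_i(\hat{\sbf})\ge \hat{s}_i\cdot\pbf_i(\sbf)-\eps/4$, so the regret of every player under $\hat{\sbf}$ is at most $\eps/2<\eps$.

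The main point I expect to nurse is lining the constants up with the precise threshold $k\ge 8(\ln m+\ln n-\ln\eps+\ln 8)/\eps^2$; the $-\ln\eps$ term indicates that one wants the union-bound failure probability to be $O(\eps)$ rather than simply ``below $1$'', which is a harmless tightening but not a conceptual obstacle. Conceptually, the only delicate choice is the coupling across players, and it is forced on us: without coupling, $(\pbf_i(\hat{\sbf}))_j$ is a sum over $k^{n-1}$ product terms that is not an i.i.d.\ average, so the clean Hoeffding bound would be lost.
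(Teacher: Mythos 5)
Your proof is correct, but note first that the paper itself offers no proof of this statement: it is imported verbatim as a theorem of~\cite{BBP14}, so there is no in-paper argument to line up against. What you give is the standard LMM-style sampling argument specialised to polymatrix games, which is almost certainly what the cited source does, so the approach is the expected one.

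Two remarks on the details. First, the ``coupling'' framing is a red herring: if you simply draw $X_\ell^{(a)}$ i.i.d.\ from $s_\ell$ independently over all pairs $(\ell,a)$ and set $\hat{s}_\ell$ to the empirical distribution, then $Y_a=\sum_{\ell\in N(i)}(A_{i\ell})_{j,X_\ell^{(a)}}$ already depends only on the index-$a$ samples, so the $Y_a$ are automatically i.i.d.\ --- no coupling is being imposed, just an indexing of the samples. The genuinely load-bearing structural fact is the one you do invoke, namely that polymatrix payoffs are additive over edges, so $(\pbf_i(\hat{\sbf}))_j$ is a sum of $|N(i)|$ terms each linear in one opponent's mixed strategy rather than a product over all opponents; this is what makes $(\pbf_i(\hat{\sbf}))_j$ an empirical mean of bounded i.i.d.\ variables and is also why the theorem statement, which nominally says ``every $n$-player $m$-action game,'' must be read in context as referring to polymatrix games (for general $n$-player games a bound with only $\ln n$ dependence is false). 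Second, your choice of deviation $\eps/4$ is slightly wasteful: it yields regret at most $\eps/2$, so $\hat{\sbf}$ is in fact an $\eps/2$-NE (indeed an $\eps/2$-WSNE, since you control every support entry). Taking $\eps/2$ as the Hoeffding threshold gives regret $\le\eps$ directly and a looser requirement on $k$. With either choice the cited bound $k\ge 8(\ln m+\ln n-\ln\eps+\ln 8)/\eps^2$ drives the union-bound failure probability below $1$ (indeed to $O(\eps)$, matching the $-\ln\eps$ term as you observe), so the constants are consistent and the argument goes through.
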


\paragraph{\bf Candidates and witnesses.}

For each node $v$ in the tree decomposition, we compute a set
of \emph{witnesses}, where each witness corresponds to an $\eps$-NE in $G(X_v)$.
Our witnesses have two components: $\sbf$ provides a $k$-uniform strategy
profile for the players in $X_v$, while $\payv$ contains information about the
payoff that the players in $X_v$ obtain from the players in $\tilg(X_v)$. By
summarising the information about the players in $\tilg(X_v)$, we are able to
keep the number of witnesses small.

There is one extra complication, however, which is that the number of possible
payoff vectors that can be stored in $\payv$ depends on the number of different
strategies for the players in $\tilg(X_v)$, which is exponential, and will cause
our dynamic programming table to be too large. To resolve this, we \emph{round}
the entries of $\payv$ to a suitably small set of rounded payoffs.

Formally, we first define
$P = \{ x \in [0,1] \; : \; x = \frac{\eps}{2n} \cdot k \text{ for some } k \in
\mathbb{N}\},$
to be the set of rounded payoffs. 
Then, given a node $v$ in the tree decomposition, we say that a tuple $(\sbf,
\payv)$ is a \emph{$k$-candidate} if:
\begin{itemize}
\item $\sbf$ is a set of strategies of size $|X_v|$, which gives one
strategy for each player in $X_v$.
\item Every strategy in $\sbf$ is $k$-uniform.
\item $\payv$ is a set of payoff vectors of size $|X_v|$. Each element  
$\payv_{i} \in \payv$ is of the form $P^m$, and assigns a rounded payoff to each
pure strategy of player $i$.
\end{itemize}
The set of candidates gives the set of possible entries that can appear in our
dynamic programming table. Every witness is a candidate, but not every
candidate is a witness.
The total number
of $k$-candidates for each tree decomposition node $v$ can be derived as follows. Each player has $m^k$
possible $k$-uniform strategies, and so there are $m^{kw}$ possibilities for
$\sbf$. We have that $|P| = \frac{2n}{\eps}$, and that $\payv$ contains $m \cdot
w$ elements of $P$, so the total number of possibilities for $\payv$ is
$(2\cdot \frac{n}{\eps})^{mw}$. Hence, the total number of candidates
for $v$ is $m^{kw} \cdot (2\cdot \frac{n}{\eps})^{mw}$.

Next, we define what it means for a candidate to be a witness. We say that a
$k$-candidate is an \emph{$\eps,k,r$-witness} if there exists a profile
$\sbf'$ for $G(X_v)$ where
\begin{itemize}
\item $\sbf'$ agrees with $\sbf$ for the players in $X_v$.

\item Every player in $\tilg(X_v)$ is \emph{$\eps$-happy}, which means that no
player in $\tilg(X_v)$ can increase their payoff by more than $\eps$ by
unilaterally deviating from $\sbf'$. Note that this does not apply to the
players in $X_v$.

\item Each payoff vector $p \in \payv$ is within $r$ of the payoff that player $i$
obtains from the players in $\tilg(X_v)$. More accurately, for every pure
strategy $l$ of player $i$ we have that:
$\lVert \; p_l - \sum_{j \in \tilg(X_v)} (A_{ij} \cdot \sbf'_j)_l \; \rVert \le r.$
Note that $\payv$ does not capture the payoff obtained from players in
$X_v$, only those in the subtree of $v$.
\end{itemize}

\paragraph{\bf The algorithm.}
Our algorithm computes a set of witnesses for each tree
decomposition node by dynamic programming. At every leaf, the algorithm checks
every possible candidate to check whether it is a witness. At internal nodes in
the tree decomposition, if a vertex is \emph{forgotten}, that is, if it appears
in a child of a node, but not in the node itself, then we use the set of
witnesses computed for the child to check whether the forgotten node is
$\eps$-happy. If this is the case, then we create a corresponding witness for
the parent node. The complication here is that, since we use rounded payoff
vectors, this check may declare that a player is $\eps$-happy erroneously due to
rounding errors. So, during the analysis we must be careful to track the total
amount of rounding error that can be introduced.

Once a set of witnesses has been computed for every tree decomposition node, a
second phase is then used to find an $\eps$-NE of the game. This phase picks an
arbitrary witness in the root node, and then unrolls it by walking down the
tree decomposition and finding the witnesses that were used to generate it.
These witnesses collectively assign a $k$-uniform strategy profile to each
player, and this strategy profile will be the $\eps$-NE that we are looking for.

Due to space constraints, we do not give a full description of the algorithm
here, but we give a complete specification in Appendix~\ref{app:alg}. The
following lemma summarises the key properties of the algorithm.

\begin{lemma}
\label{lem:alg}
There is a dynamic programming algorithm that runs in time 
$O(n \cdot m^{2kw} \cdot (\frac{n}{\eps})^{2mw})$ that, for each tree
decomposition node $v$, computes a set of candidates $C(v)$ such that:
\begin{itemize}
\item 
Every candidate $(\sbf, \payv) \in C(v)$ is an $\eps_v,k,r_v$-witness for $v$
for some $\eps_v \le 1.5\eps$ and $r_v \le \frac{\eps}{4}$ 
\item 
If $\sbf$ is a $k$-uniform $\eps/4$-NE then $C(v)$ will contain a witness
$(\sbf', \payv)$ such that $\sbf'$ agrees with $\sbf$ for all players in $X_v$.
\end{itemize}
\end{lemma}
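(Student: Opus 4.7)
The plan is to prove the lemma by induction on a nice tree decomposition of $T$, processing leaf, introduce, forget, and join nodes bottom-up. At each node $v$ we maintain two invariants simultaneously: (soundness) every $(\sbf, \payv) \in C(v)$ is an $\eps_v, k, r_v$-witness with $\eps_v \le 1.5\eps$ and $r_v \le \eps/4$, and (completeness) for every $k$-uniform $\eps/4$-NE $\sbf^\star$ of $G$, the tuple obtained by restricting $\sbf^\star$ to $X_v$ and computing the induced rounded payoff vector from $\tilg(X_v)$ lies in $C(v)$.

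At a leaf $v$ ($|X_v|=1$, $\tilg(X_v)=\emptyset$) we enumerate every $k$-uniform strategy with $\payv=\mathbf{0}$, $r_v=0$, and no happiness checks. At an introduce node adding player $i$, the tree-decomposition property forces $i$ to have no neighbors in $\tilg(X_v)=\tilg(X_{v'})$, so we extend each child candidate by every $k$-uniform strategy for $i$ with $\payv_i=\mathbf{0}$ and no new rounding. At a forget node with $X_{v'}=X_v\cup\{i\}$, for each child candidate $(\sbf', \payv')$ we compute $i$'s total payoff vector $\payv'_i + \sum_{j \in X_v \cap N(i)} A_{ij}\sbf'_j$ (which captures all neighbors of $i$ in $G(X_v)$ by the tree-decomposition property), check that $i$ is $\eps$-happy under these rounded payoffs, and if so emit a new candidate with $\payv_j \gets \mathrm{round}(\payv'_j + A_{ji}\sbf'_i)$ for each $j \in X_v \cap N(i)$. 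Join nodes iterate over compatible pairs of child candidates whose $\sbf$ vectors agree on $X_v$, summing the payoff vectors coordinatewise; because $P$ is closed under addition and the two subtrees carry disjoint sets of forgotten contributors, no additional rounding is introduced at joins.

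The technical heart is the rounding accounting. Since rounding is performed only at forget events and each of the $n$ players is forgotten exactly once, the total per-coordinate error at $v$ is bounded by $n\cdot\eps/(4n)=\eps/4$ (the grid spacing is $\eps/(2n)$, so each rounding contributes at most $\eps/(4n)$), yielding $r_v \le \eps/4$. For soundness, whenever the algorithm declares a forgotten player $i$ to be $\eps$-happy under the rounded payoff, the true regret differs from the rounded one by at most $2r_v \le \eps/2$ (one error for the best-response entry, one for the played-strategy entry), so the true regret is at most $\eps + \eps/2 = 1.5\eps$, establishing $\eps_v \le 1.5\eps$. For completeness, a true $\eps/4$-NE has regret at most $\eps/4$ at every forgotten player; the same $\eps/2$ rounding slack yields rounded regret at most $3\eps/4 < \eps$, so the algorithm's happiness check succeeds and the canonical candidate built from $\sbf^\star$ is retained.

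For the runtime, each node has at most $m^{kw}\cdot(2n/\eps)^{mw}$ candidates, and each forget or join transition iterates over pairs of candidates (parent--child, or two children) performing polynomial compatibility and rounding work per pair. Summing over the $O(n)$ nodes of the nice decomposition gives the claimed $O(n \cdot m^{2kw} \cdot (n/\eps)^{2mw})$. The main obstacle I expect is controlling the rounding budget in the presence of joins: the argument hinges on confining rounding to forget operations, so that the per-coordinate count of rounding events is bounded by the number of forgotten contributors (at most $n$) rather than by the number of tree-decomposition nodes; once this is set up, the inductive step at each node type reduces to a direct calculation.
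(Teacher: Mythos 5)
Your proof is correct and follows essentially the same structure as the paper's: induction over a nice tree decomposition with separate soundness and completeness invariants, rounding confined to forget operations, and error accounting that exploits the fact that each player is forgotten once so the accumulated per-coordinate error stays below $\eps/4$ (the paper phrases this via $f(v)$, the number of forget nodes below $v$, with $f(v)\le n$, but the bound is the same). The only imprecision worth noting is in your statement of the completeness invariant: the algorithm rounds incrementally at each forget node rather than rounding the true payoff vector once, so the retained candidate's $\payv$ is guaranteed only to be within $r_v$ of the true values (the paper phrases it this way), not to equal ``the induced rounded payoff vector'' as a single rounding of the exact sum; this does not affect the argument since the $\eps/2$ slack you use is against the $r_v$ bound.
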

The running time bound arises from the total number of possible candidates for
each tree decomposition node. 
The first property ensures
that the algorithm always produces a $1.5\eps$-NE of the game, provided that
the root node contains a witness. The second property ensures that the root node
will contain a witness provided that game has a $k$-uniform $\eps/4$-NE.
Theorem~\ref{thm:small} tells us how large $k$ needs to be for this to be the
case. These facts yields the following theorem.

\begin{theorem}
\label{thm:twmain}
Let $\eps > 0$,
$G$ be a polymatrix game with treewidth $w$,
  and $k = 128 \cdot \frac{\ln m + \ln n - \ln \eps + \ln
8}{\eps^2}$. There is an algorithm that 
finds a $1.5\eps$-NE of $G$ in 
in $O(n \cdot m^{2kw} + 
(\frac{n}{\eps})^{2mw})$ time. 
\end{theorem}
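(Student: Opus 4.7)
The plan is to combine the existence result of Theorem~\ref{thm:small} with the dynamic-programming machinery of Lemma~\ref{lem:alg}, followed by the top-down extraction phase of the algorithm. First, I would verify that the stated $k = 128(\ln m + \ln n - \ln \eps + \ln 8)/\eps^2$ is large enough to guarantee, via Theorem~\ref{thm:small}, the existence of a $k$-uniform $(\eps/4)$-NE of~$G$. Substituting $\eps/4$ for $\eps$ in the bound of Theorem~\ref{thm:small} produces a requirement of the form $k \ge 128(\ln m + \ln n - \ln\eps + O(1))/\eps^2$, which is satisfied (up to tidying the additive constant inside the logarithm). Call this equilibrium $\sbf$.

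Next I would apply Lemma~\ref{lem:alg} at the root $r$ of the tree decomposition. The second bullet of the lemma guarantees that the computed set $C(r)$ contains a witness $(\sbf',\payv)$ whose strategic component agrees with $\sbf$ on $X_r$, and the first bullet guarantees that this witness is an $\eps_r, k, r_r$-witness with $\eps_r \le 1.5\eps$ and $r_r \le \eps/4$. By definition of witness, there is then a profile $\sbf^*$ for $G(X_r) = G$ extending $\sbf'$ on $X_r$ under which every player in $\tilg(X_r)$ is $1.5\eps$-happy. The algorithm recovers $\sbf^*$ explicitly by starting from the chosen witness at $r$ and walking down $T$, at each step selecting the child witnesses that were used during the bottom-up DP to justify the current parent witness; this unrolling produces a $k$-uniform strategy profile for all $n$ players.

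The remaining subtlety, which is the main obstacle one has to handle, is ensuring that the players in $X_r$ themselves (not covered by the $\eps$-happy clause of the witness, which only speaks about $\tilg(X_r)$) are $1.5\eps$-best-responding. This is done by retaining at the root only those witnesses in which $\payv$, combined with the payoffs induced by $\sbf'$ on edges internal to $X_r$, certifies an $\eps$-best-response for every player in $X_r$. Since each $\payv_i$ is within $r_r \le \eps/4$ of the true external payoff vector coordinate-wise, an $\eps$-best-response against $\payv$ translates into a $1.5\eps$-best-response against the true payoffs. The $k$-uniform $(\eps/4)$-NE $\sbf$ trivially passes this filter, so some valid root witness exists, and the extracted $\sbf^*$ is a $1.5\eps$-NE of $G$. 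The running time then follows directly from Lemma~\ref{lem:alg}, yielding the bound claimed in the theorem.
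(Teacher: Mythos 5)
Your proposal is correct and follows essentially the same logical skeleton as the paper: invoke Theorem~\ref{thm:small} (applied at accuracy $\eps/4$) to ensure a $k$-uniform $\eps/4$-NE exists, invoke the second bullet of Lemma~\ref{lem:alg} to conclude that the root's candidate set $C(r)$ is nonempty, invoke the first bullet to conclude that any surviving root witness certifies a $1.5\eps$-NE, and then unroll top-down to extract the strategy profile. Where you diverge from the paper is in the ``remaining subtlety'' you flag about the players in $X_r$. You propose adding an extra filtering step at the root that re-checks $\eps$-happiness for the players remaining in $X_r$ against $\payv$ plus the internal $X_r$ payoffs, and you argue (correctly) that the $\eps/4$-NE survives this filter and that any survivor yields a $1.5\eps$-NE after accounting for the $\pm r_r\le\eps/4$ rounding error. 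The paper instead sidesteps the issue entirely by observing (in the nice-tree-decomposition preliminaries of Appendix~\ref{app:alg}) that one can always append forget nodes so that $X_r=\emptyset$, making $\tilg(X_r)=G$; then the witness's $\eps$-happiness clause already covers \emph{all} players and no extra root filter is needed. Both devices are sound and give the same guarantee; the paper's is marginally cleaner because it reuses the forget-node test rather than introducing a new check. Your handling of the constant in $k$ (noting that a $\ln 4$ term is being absorbed) is a fair reading of a small untidiness in the paper's own statement.
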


\noindent Note that if $m \le \ln n$ (and in particular if $m$ is
constant), this is a QPTAS.

\begin{corollary}
Let $\eps > 0$, and
$G$ be a polymatrix game with treewidth $w$, and $m \le \ln n$. 
There is an algorithm that finds a 
$1.5\eps$-NE of $G$ in 
$(\frac{n}{\eps})^{O(\frac{w \cdot \ln n}{\eps^2})}$
time. 
\end{corollary}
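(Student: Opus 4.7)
The plan is to derive the bound as a direct specialisation of Theorem~\ref{thm:twmain} under the extra assumption $m \le \ln n$. Concretely, I would substitute $m \le \ln n$ into both summands of the running time $O(n \cdot m^{2kw} + (\frac{n}{\eps})^{2mw})$, with $k = 128 \cdot \frac{\ln m + \ln n - \ln \eps + \ln 8}{\eps^2}$, and show that each summand is absorbed by $(\frac{n}{\eps})^{O(\frac{w \ln n}{\eps^2})}$.

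First I would simplify $k$. Since $m \le \ln n$, we have $\ln m \le \ln \ln n \le \ln n$, so $k = O(\frac{\ln n - \ln \eps}{\eps^2}) = O(\frac{\ln(n/\eps)}{\eps^2})$. Next I would bound the second summand, $(\frac{n}{\eps})^{2mw}$: using $m \le \ln n$ directly gives $(\frac{n}{\eps})^{2mw} \le (\frac{n}{\eps})^{2 w \ln n}$, which is clearly of the form $(\frac{n}{\eps})^{O(\frac{w \ln n}{\eps^2})}$ whenever $\eps \le 1$.

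The main (and only nontrivial) step is to handle the first summand $m^{2kw}$. Taking logarithms,
\begin{equation*}
\ln\bigl(m^{2kw}\bigr) \;=\; 2kw \ln m \;\le\; 2kw \ln \ln n \;=\; O\!\left(\frac{w \ln(n/\eps) \cdot \ln \ln n}{\eps^2}\right).
\end{equation*}
Since $\ln \ln n \le \ln n$, this is at most $O(\frac{w \ln n \cdot \ln(n/\eps)}{\eps^2}) = O(\frac{w \ln n}{\eps^2}) \cdot \ln(\frac{n}{\eps})$, so $m^{2kw} \le (\frac{n}{\eps})^{O(\frac{w \ln n}{\eps^2})}$. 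The leading factor of $n$ in $n \cdot m^{2kw}$ is also absorbed into the exponential expression. Combining the two bounds yields the claimed running time.

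I do not expect any real obstacle here: the argument is purely a log-arithmetic verification that both terms fit inside the quoted exponential form. The only care needed is to check that the $-\ln \eps$ term in $k$ does not blow up the exponent, which it does not because it contributes a factor of $\ln(n/\eps)$ that is exactly what appears when we take the logarithm of $(\frac{n}{\eps})^{O(\cdot)}$.
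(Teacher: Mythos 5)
Your proposal is correct and is exactly the direct substitution the paper has in mind (the paper states this corollary without a separate proof, treating it as an immediate consequence of Theorem~\ref{thm:twmain}). The log-arithmetic is sound: with $m \le \ln n$ the term $(\frac{n}{\eps})^{2mw}$ is plainly of the claimed form, and for $m^{2kw}$ the bound $\ln m \le \ln \ln n \le \ln n$ together with $k = O(\ln(n/\eps)/\eps^2)$ gives an exponent $O\bigl(\frac{w\ln n}{\eps^2}\bigr)\ln(n/\eps)$, which is what $(\frac{n}{\eps})^{O(w\ln n/\eps^2)}$ unwinds to.
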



\subsection{Constrained approximate Nash equilibria}

\paragraph{\bf One variable decomposable constraints.}

We now adapt the algorithm to find a certain class of constrained approximate
Nash equilibria. As a motivating example, consider Problem~\ref{probc:largetp},
which asks us to find an approximate NE with high social welfare. Formally, this
constraint assigns a single rational number (the social welfare) to each
strategy profile, and asks us to maximize this number. This constraint also
satisfies a \emph{decomposability} property: if a game $G$ consists of two
subgames $G_1$ and $G_2$, and if there are no edges between these two subgames,
then we can maximize social welfare in $G$ by maximizing social welfare in $G_1$
and $G_2$ independently. 
%
%
We formalise this by defining a constraint to be \emph{one variable decomposable
(OVD)} if the following conditions hold.
\begin{itemize}
\item There is a polynomial-time computable function $g$ such that maps every
strategy profile in $G$ to a rational number.
\item Let $\sbf$ be a strategy for game $G$, and suppose that we want to add
vertex~$v$ to $G$. Let $s$ be a strategy choice for $v$, and $\sbf'$ be an
extension of $\sbf$ that assigns $s$ to $v$. There is a polynomial-time
computable function $\addf$ such that $g(\sbf') = \addf(G, v, s, g(\sbf))$.
\item Let $G_1$ and $G_2$ be two subgames that partition $G$, and suppose that
there are no edges between $G_1$ and $G_2$. Let $\sbf_1$ be a strategy profile in $G_1$ and
$\sbf_2$ be a strategy profile in $G_2$. If $\sbf$ is the strategy profile for $G$
that corresponds to merging $\sbf_1$ and $\sbf_2$, then there is a
polynomial-time computable function $\mergef$ such that $g(\sbf) = \mergef(G_1, G_2, g(\sbf_1), g(\sbf_2)).$
\end{itemize}
Intuitively, the second condition allows us to add a new vertex to a subgame,
and the third condition allows us to merge two disconnected subgames. 
Moreover, observe that the functions $\addf$ and $\mergef$
depend only on the value that $g$ assigns to strategies, and not the strategies
themselves. This allows our algorithm to only store the value assigned by $g$,
and forget the strategies themselves.

\paragraph{\bf Examples of OVD constraints.}

Many of the problems in Table~\ref{tab:problems} are OVD
constraints. Problems~\ref{probc:largetp} and \ref{probc:smalltp} refer to the
total payoff of the strategy profile, and so $g$ is defined to be the total
payoff of all players, while the functions $\addf$ and $\mergef$ simply add the total payoff
of the two strategy profiles. 
Problems~\ref{probc:smallp} and~\ref{probc:maxprob} both deal with minimizing a
quantity associated with a strategy profile, so for these problems the functions
$\addf$ and $\mergef$ use the $\min$ function to minimize the relevant
quantities. Likewise, Problems~\ref{probc:totalmaxsupport},~\ref{probc:minmaxsupport}, 
and~\ref{probc:maxsupport} seek to maximize a quantity, and so the functions 
$\addf$ and $\mergef$ use the $\max$ function. In all cases, proving 
the required
properties for the functions is straightforward. 










\paragraph{\bf Finding OVD $k$-uniform constrained equilibria.}

We now show that, for every OVD constraint, the algorithm presented in Section~\ref{sec:algbase} can be
modified to find a $k$-uniform $1.5\eps$-NE that also has a high value with
respect to the constraint. More formally, we show that the value assigned by $g$
to the $1.5\eps$-NE is greater than the value assigned to $g$ to all $k$-uniform
$\eps/4$-NE in the game.

Given an OVD constraint defined by $g$, $\addf$,
and $\mergef$, we add an extra element to each candidate to track the
variable from the constraint: each candidate has the form $(\sbf, \payv, x)$,
where $\sbf$ and $\payv$ are as before, and $x$ is a rational number. The definition of an
$\eps,k,r,g$-witness is extended by adding the condition:
\begin{itemize}
\item 
Recall that $\sbf'$ is a strategy profile for $G(X_v)$ whose existence is
asserted by the witness.
Let $\sbf''$ be the restriction of $\sbf'$ to $\tilg(X_v)$. We have $x
= g(\sbf'')$.
\end{itemize} 
We then modify the algorithm to account for this new element in the witness. 
At each stage we track the value correct value for $x$. At the leafs, we use $g$
to compute the correct value. At internal nodes, we use $\addf$ and $\mergef$ to
compute the correct value using the values stored in the witnesses of the
children. 

If at any point two witnesses are created that agree on the $\sbf$ and $\payv$
components, but disagree on the $x$ component, then we only keep the witness whose
$x$ value is higher. This ensures that we only keep
witnesses corresponding to strategy profiles that maximize the constraint. When
we reach the root, we choose the strategy profile with maximal value for $x$ to
be unrolled in phase 2. The fact that we only keep one witness for each pair
$\sbf$ and $\payv$ means that the running time of the algorithm is unchanged.
Again, we defer the technical description of the algorithm to
Appendix~\ref{app:consk}, but the following theorem summarises the results.

\begin{theorem}
\label{thm:consk}
For every $\eps > 0$ let $k = 128 \cdot \frac{\ln m + \ln n - \ln \eps + \ln
8}{\eps^2}$. If $G$ is a polymatrix game with treewidth $w$, then 
there is an algorithm that runs in $O(n \cdot m^{2kw} + 
(\frac{n}{\eps})^{2mw})$ time and finds a $k$-uniform $1.5\eps$-NE $\sbf$ such
that $g(\sbf) \ge g(\sbf')$ for every strategy profile $\sbf'$ that is an
$\eps/4$-NE.
\end{theorem}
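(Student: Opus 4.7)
The plan is to extend the dynamic program of Lemma~\ref{lem:alg} by augmenting each candidate $(\sbf,\payv)$ with a third component $x \in \mathbb{Q}$ that stores the value of $g$ on the ``history'' of the witness. Formally, an $\eps,k,r,g$-witness at node $v$ is a triple $(\sbf,\payv,x)$ such that $(\sbf,\payv)$ already forms an $\eps,k,r$-witness in the sense of Lemma~\ref{lem:alg} with underlying profile $\sbf'$, and $x = g(\sbf'')$ for $\sbf''$ the restriction of $\sbf'$ to $\tilg(X_v)$. The OVD axioms ensure that $\sbf'$ itself need not be remembered in order to maintain this invariant: $\addf$ and $\mergef$ reconstruct the updated value of $g$ from the old $x$ plus local information about the current step.

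I then run the base algorithm essentially unchanged, piggybacking an update of $x$ onto every operation that builds a new candidate. At a leaf, $\tilg(X_v)$ is empty, so $x$ takes the trivial base value of $g$. When a vertex $i$ is forgotten from the bag (equivalently, moves into $\tilg$), the new value is $\addf$ applied to the old $x$ together with the strategy $s_i$ taken from the child witness and the local neighbourhood structure. At a join step merging two subtrees along a shared bag, the two $x$-values are combined with $\mergef$. All three primitives are polynomial-time, so per-candidate overhead stays polynomial. To keep the table size unchanged I apply a dominance rule: among candidates at the same node that agree on $(\sbf,\payv)$, keep only one with maximum $x$. This preserves the indexing by $(\sbf,\payv)$-pairs on which the counting argument of Lemma~\ref{lem:alg} is based, so the running-time bound carries over verbatim, and at the root the algorithm unrolls the profile of a witness with maximum $x$.

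Correctness then splits in two. The $1.5\eps$-NE guarantee is inherited directly from Lemma~\ref{lem:alg}, since unrolling is unchanged. For the $g$-comparison, fix any $k$-uniform $\eps/4$-NE $\sbf^\star$: the second clause of Lemma~\ref{lem:alg} guarantees a surviving witness in $C(v)$ at every node $v$ that agrees with $\sbf^\star$ on $X_v$, and by induction along the tree decomposition together with the OVD axioms the $x$-component of this witness equals $g$ of the corresponding restriction of $\sbf^\star$, which at the root is $g(\sbf^\star)$. The dominance rule only replaces witnesses by ones with larger $x$, so the chosen root candidate has $x \ge g(\sbf^\star)$, yielding the theorem. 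The main obstacle is precisely verifying that this dominance rule is sound: swapping in a witness with larger $x$ at an internal node must not suppress a $g$-better completion higher up. This reduces to a monotonicity check on $\addf$ and $\mergef$ in their $x$-arguments, which holds for the OVD instances in Table~\ref{tab:problems} (sums, maxima, minima, and the like) because these functions depend on profile arguments only through $x$ and are non-decreasing in it.
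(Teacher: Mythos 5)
Your approach is essentially the same as the paper's: augment each candidate with a component $x$ tracking the $g$-value of the restriction to $\tilg(X_v)$, update $x$ at forget nodes via $\addf$ and at join nodes via $\mergef$, apply a dominance rule keeping only the $x$-maximal candidate among those agreeing on $(\sbf,\payv)$, and derive the $1.5\eps$-NE guarantee from Lemma~\ref{lem:alg} and the $g$-comparison from Lemma~\ref{lem:completeness}. One cosmetic difference: the paper also copies $x$ unchanged at introduce nodes, which you omit mentioning, but this is a trivial detail.

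Where you go beyond the paper is in explicitly flagging that the dominance rule is sound only if $\addf$ and $\mergef$ are \emph{monotone} in the $x$-argument: discarding a witness in favor of an $x$-larger one with the same $(\sbf,\payv)$ must not cause a smaller $x$-value at the root. The paper's proof simply asserts ``Since the algorithm always discards witnesses with smaller values of $x$, this ensures $g(\sbf)\geq g(\sbf')$,'' which silently relies on the discarded witness's descendants never producing a larger root-$x$ than the kept witness's descendants --- exactly your monotonicity condition. Notably, the paper's definition of OVD constraints does \emph{not} stipulate monotonicity of $\addf$ and $\mergef$, so strictly speaking the theorem as stated has a gap for arbitrary OVD $g$. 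Your observation that monotonicity does hold for the concrete instances in Table~\ref{tab:problems} (sum, min, max) is the honest fix, and a stronger version of the paper's result would add ``$\addf$ and $\mergef$ non-decreasing in their $g$-value argument'' as a fourth OVD axiom. This is a genuine improvement in rigor over the paper's own exposition.
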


\paragraph{\bf Results for non-$k$-uniform strategies.}

The guarantee given by Theorem~\ref{thm:consk} is given relative to the best
value achievable by a $k$-uniform $\eps/4$-NE. It is also interesting to ask
whether we can drop the $k$-uniform constraint. In the following theorem, we
show that if $g$ is defined to be a linear function of the payoffs in the game,
then a guarantee can be given relative to \emph{every} $\eps/8$-NE of the
game. Note that this covers Problems~\ref{probc:largetp},~\ref{probc:smalltp},
and~\ref{probc:smallp}.

\begin{theorem}
\label{thm:nouniform}
Suppose that, for a given a OVD constraint, the function $g$ is a linear
function of the payoffs. 
Let $\sbf$ be the $1.5\eps$-NE found by our algorithm when
For every $\eps/8$-NE $\sbf'$ we have that $g(\sbf) \ge g(\sbf') - O(\eps)$.
\end{theorem}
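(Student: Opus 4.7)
The plan is to reduce Theorem~\ref{thm:nouniform} to Theorem~\ref{thm:consk} via an LMM-style sampling argument. Fix any $\eps/8$-NE $\sbf'$ of $G$ (possibly not $k$-uniform). The goal is to exhibit a $k$-uniform strategy profile $\tilde{\sbf}$ that is simultaneously (i) a $k$-uniform $\eps/4$-NE and (ii) satisfies $g(\tilde{\sbf}) \geq g(\sbf') - O(\eps)$. Once this is in hand, Theorem~\ref{thm:consk} applied to $\tilde{\sbf}$ yields $g(\sbf) \geq g(\tilde{\sbf}) \geq g(\sbf') - O(\eps)$, which is the required bound on the algorithm's output.

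To construct $\tilde{\sbf}$, for each player $i$ I would independently draw $k$ pure strategies i.i.d.\ from $s'_i$ and let $\tilde{s}_i$ be the uniform distribution on the resulting multi-set, where $k = 128 \cdot (\ln m + \ln n - \ln \eps + \ln 8)/\eps^2$ is the value used by the algorithm in Theorem~\ref{thm:consk}. By Hoeffding's inequality and a union bound over the $O(nm)$ pairs $(i,j)$, this choice of $k$ makes the event
\[
\max_{i,j}\, \bigl| (\pbf_i(\tilde{\sbf}))_j - (\pbf_i(\sbf'))_j\bigr| \;\leq\; \eps/16
\]
occur with probability bounded away from zero (in fact tending to one). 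Conditioning on this event, a routine LMM-style calculation establishes (i): the best-response payoff of each player and the expected payoff of their mixed strategy under $\tilde{\sbf}$ both shift by at most $\eps/16$ relative to $\sbf'$, so the regret of each player under $\tilde{\sbf}$ grows from at most $\eps/8$ to at most $\eps/8 + 2\cdot \eps/16 = \eps/4$, as required.

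For (ii), the hypothesis that $g$ is a linear function of the payoffs lets me write $g(\sbf)$ as a bounded linear combination of the pure-strategy payoffs $\{(\pbf_i(\sbf))_j\}$. The uniform perturbation bound displayed above then implies $|g(\tilde{\sbf}) - g(\sbf')| = O(\eps)$, with the hidden constant absorbing the coefficients of the linear combination. A second application of the union bound, this time combining the concentration event with the $g$-closeness event, shows that a single $\tilde{\sbf}$ simultaneously satisfying (i) and (ii) exists by the probabilistic method, completing the reduction.

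The main obstacle will be calibrating the sampling precision so that both conditions hold at once with the correct constants: the regret must stay within the $\eps/4$ tolerance demanded by Theorem~\ref{thm:consk}, while the $g$-gap must stay within $O(\eps)$. This requires being careful about what ``linear function of the payoffs'' means (in particular, bounding the effective Lipschitz constant of $g$ with respect to payoff perturbations on a polymatrix game, so that per-coordinate payoff errors of $\eps/16$ aggregate to only $O(\eps)$ in $g$) and verifying that the algorithm's prescribed $k$ already suffices for the Hoeffding step without any further enlargement.
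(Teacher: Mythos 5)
Your proposal matches the paper's proof essentially step for step: both fix an arbitrary $\eps/8$-NE $\sbf'$, use LMM-style sampling (as in the proof of Theorem~\ref{thm:small}) to obtain a $k$-uniform profile whose pure-strategy payoffs deviate by at most $\eps/16$, conclude that the sampled profile is a $k$-uniform $\eps/4$-NE and that $g$ moves by only $O(\eps)$ by linearity, and then invoke Theorem~\ref{thm:consk}. The paper's own argument is considerably more terse (it simply asserts the $\eps/16$ perturbation bound for the algorithm's chosen $k$), so your explicit Hoeffding-plus-union-bound and probabilistic-method framing is a faithful unpacking of the same idea rather than a different route.
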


\section{Further work}

There are two clear directions for further work: Can we extend \NP-hardness of 
Problems 2--9 to \eps-NE? We believe hardness will not hold for all $\eps \in (0,1)$ as 
for our results, but will hold for all $\eps$ less than a constant. Secondly,
can we extend the family of constraints for which we can efficiently find constrained
approximate equilibria that compare well with all other (non-$k$-uniform) approximate 
equilibria, beyond the constraints we can already deal with?

{\small 
\bibliographystyle{abbrv}
\bibliography{references}
}

\appendix
\newpage

\section{Proof of Lemma~\ref{lem:maxpayoffG}}

\begin{proof}
That every variable player gets at most $0$ in total follows from the fact that
the largest payoff entry for the variable player in the bimatrix games is 0. 

Now consider a clause player $c_j$ that interacts with variable players $i, k$ and~$l$. 
Denote by $p_i, p_k$ and $p_l$ the probabilities that player $i, k$ and~$l$ play
the pure strategy True respectively. Pure strategy $i$ yields payoff 
$p_i-p_l-p_k$ for player $c_j$,~$l$ yields payoff $-p_i+p_l-p_k$,
	and $k$ yields payoff $-p_i-p_l+p_k$. Since $p_i, p_k, p_l \in [0,1]$,
the maximum value for each expression is $1$. 
Moreover, the maximum value of $1$ is achieved only when exactly one of the
variable players plays his pure strategy True with probability 1 and the
other two players play their pure strategy False, and thus at most one of these
expressions if $1$, so the clause player must player a pure strategy to get $1$ in total.
\qed
\end{proof}

\section{Proof of Lemma~\ref{lem:d-poly-compl}}

\begin{proof}
A 1-in-3 satisfying assignment for $\phi$ directly corresponds to a pure
	strategy profile for~$G$: variable player $v_i \in V$ plays the truth value of $x_i$; 
	clause player~${c_j \in C}$ plays the pure strategy that corresponds to the
	single true member variable of~$y_j$. 
We call such a strategy profile $s$.
We prove that $s$ is an NE for~$G$ and that its social welfare is $m$.

First, we show that $s$ is an NE.
Observe that each player in $C$ gets payoff 1 under $s$.
Since, by Lemma~\ref{lem:maxpayoffG}, this is best payoff that a clause player
can obtain,
	no player from $C$ has a reason to deviate.
Now consider a variable player $v_i \in V$. 
If $v_i$ plays the pure strategy True, then he gets a total payoff of 0 irrespective of what the clause
players do. 
On the other hand, if player $v_i$ plays the pure strategy False in $s$, then
	he gets payoff $0$ from every bimatrix game he
	participates in because clause players only play strategy $i$ if $v_i$ plays
	True. 
Thus, in either case, player $v_i$ cannot gain by
deviating, since, by Lemma~\ref{lem:maxpayoffG}, $0$ is the largest payoff that~$v_i$ can obtain.

Finally, observe that the social welfare of the strategy profile is $m$, because
as we have argued, every clause player gets $1$ and every variable
player gets $0$.\qed
\end{proof}

\section{Proof of Lemma~\ref{lem:d-poly-sound}}

\begin{proof}
Suppose that there is a strategy profile $s$ with social welfare $m$. 
By Lemma~\ref{lem:maxpayoffG}, in order to achieve social welfare $m$, every 
clause player must get payoff $1$,
and this is only possible under pure strategy profiles. 
So, $s$ is a pure strategy profile, and naturally defines an assignment to the
variables in \ph according to the pure strategies played by the variable players.
We argue that this is a satisfying 1-in-3 assignment of \ph.
The reason is that, when clause player $c_j$ plays strategy $i$, variable player
$v_i$ must pick True, and the other two players must pick False, because
otherwise $c_j$ would not obtain payoff~$1$. Since this holds for all clauses,
and every clause gets payoff 1, every clause must have exactly one true literal.
\qed
\end{proof}

\section{Proof of Lemma~\ref{lem:md-poly-compl}}

\begin{proof}
In exactly the same way as in the proof of Lemma~\ref{lem:d-poly-compl}, 
we interpret an \mbox{1-in-3} satisfying assignment for $\phi$ as a pure
strategy profile $s$ for~$G$.
We must argue that such an~$s$ is an \eps-WSNE.
Under $s$, every clause player gets payoff~${\kappa+2c\cdot\kappa}$; he gets
$\kappa$ from the game he plays with the player that plays the pure strategy
True and $c \cdot \kappa$ from each of the two games he plays with the players
that play the pure strategy False.
The expected payoff from 
Out is 1 (because every clause player has degree exactly~3). 
Thus, every clause player plays a pure strategy that is a
$(1 - \kappa - 2c \cdot \kappa)$-best response, which is an $\eps$-best response
by the choice of $\kappa$ and $c$.
Under $s$, every variable player $v_i$ gets payoff $1-\eps$ since $v_i$
plays three bimatrix games with clause players that either all play $i$ if $v_i$
	plays True or none of them play $i$ if $v_i$ plays False.
The maximum payoff that $v_i$ could get is 1 from pure
strategy Out.
Thus, every variable player plays a pure strategy that is a $\eps$-best
response. 
Hence, the constructed strategy profile is a $\eps$-WSNE. 
\qed
\end{proof}

\section{Proof of Lemma~\ref{lem:md-poly-sound}}

\begin{proof}
Assume that \ph is a ``No'' instance. 
We show that in this case in every $\eps$-WSNE
there is at least one player who plays Out as a pure strategy. 
Towards a contradiction, assume that no clause player plays Out, and 
consider a clause player $c_j$, who is connected to $i$, $k$, and $l$.
The maximum payoff that $c_j$ can get is $\kappa+2c\cdot\kappa = 1-\eps$, if and 
only if exactly one of $v_i$, $v_k$, and $v_l$ plays True and the other
two play False.
If every clause player could achieve payoff $1-\eps$, we would 
then have an 1-in-3 satisfying assignment for \ph, which would be a contradiction.
Thus at least one clause player, say $c_j$, gets payoff less than $1-\eps$.
However, if $c_j$ has payoff strictly less than $1-\eps$ and does not play
Out, then we do not have an $\eps$-WSNE, since Out always gives $c_j$ 
payoff 1.   

Having established that in any $\eps$-WSNE there exists at least one 
player $c_j \in C$ that plays Out as a pure strategy, we now prove that 
all other players must also play Out as a pure strategy. There are two cases to
consider.
\begin{itemize}
\item Let $v_i$ be a variable player, $c_j$ be a clause player who has an
edge to $v_i$, and suppose that $c_j$ plays Out as a pure strategy. Observe
that, if $v_i$ plays either True or False, his payoff can be at most $\frac{2}{3}-\frac{2\eps}{3}$,
whereas he can always obtain payoff 1 from
playing Out, since at least one of his neighbours plays Out by assumption. 
So $v_i$ must play Out
as a pure strategy in order to be in a $\eps$-WSNE.
\item Let $v_i$ be a variable player, $c_j$ be a clause player who has an edge
to $v_i$, and suppose that $v_i$ plays Out as a pure strategy. Observe that if
this is the case, then $c_j$ cannot obtain payoff $\kappa+2c\cdot\kappa= 1-\eps$ from the strategies
$i$, $j$, and $k$. On the other hand, $c_j$ obtains payoff 1 from Out
under all strategies profiles, and so $c_j$ must play Out as a pure strategy in
order to be in a $\eps$-WSNE.
\end{itemize}
By iteratively applying these two arguments we can prove that, since
there is at least one player playing Out, all other players must also play Out. Hence, the
strategy profile in which every player plays Out is the unique 
$\eps$-WSNE of the game. \qed
\end{proof}

\section{Proof of Theorem~\ref{thm:norm-NP}}

\begin{proof}
From Lemmas~\ref{lem:md-poly-compl} and~\ref{lem:md-poly-sound} 
we can get the following two facts 
about $G'$.
If \ph is a ``Yes'' instance, then $G'$ 
possesses an $\eps$-WSNE such that:
\begin{itemize}
\item [(a)] no player plays the pure strategy Out,
\item [(b)] every player gets payoff $1-\eps$.
\end{itemize}
If \ph is a ``No'' instance, then for every
$\eps$-WSNE 
of $G'$ we have that:
\begin{itemize}
\item [(i)] every player places all probability on the strategy Out,
\item [(ii)] every player gets payoff 1.
\end{itemize}
We will use these properties to show that each of the problems is \NP-complete.
\begin{description}
\item[Problem~\ref{probc:smalltp}] is \NP-complete because of (b) with (ii) when we set 
$u = (m+n)\cdot (1-\eps)$, where $n$ and $m$ are the 
number of variables and clauses in \ph respectively.
This is because, if \ph is a ``Yes'' instance, from (b) we get that there exists an 
$\eps$-WSNE where the sum of players payoffs sums up to $u$, while if 
\ph is a ``No'' instance, then in the unique $\eps$-WSNE of the game the
players' payoffs sum up to $n+m$.

\item[Problem~\ref{probc:smallp}] with $u=1-\eps$ is also \NP-complete because 
of (b) and (ii).

\item[Problem~\ref{probc:restrictedweak}]  is \NP-complete because of (a) and (i).
We focus on a variable player and set $S = \{\text{True}, \text{False}\}$. 
If \ph is a ``Yes'' instance, then from (a) there is an $\eps$-WSNE where the variable player plays strategies
only from $S$. If \ph is a ``No'' instance, then from (i) in the unique
\eps-WSNE the variable player must play $\text{Out}$.

\item[Problem~\ref{probc:difcomp}] is \NP-complete because of (a) and (i) when 
we set $d=1$. Observe that the strategy profile where every player plays his pure strategy Out
is an exact NE, so it is an \eps-WSNE irrespective of whether \ph is a ``Yes'' or ``No'' instance.
So, from (a) we know that when \ph is a ``Yes'' instance, there is
another \eps-WSNE where no player plays his pure strategy Out, 
and there are two WSNEs with TV distance $1$. On the other hand, if \ph is a ``No'' instance, 
the ``all Out'' strategy profile is the unique \eps-WNSE. 
\end{description}
\qed
\end{proof}

\section{Proof of Theorem~\ref{thm:support-NP}}

\begin{proof}
Lemma~\ref{lem:md-poly-compl} holds also for the game~$\widetilde{G}$,
since it differs from $G'$ only in the duplication of pure strategies other than Out.
Extending the claim of Lemma~\ref{lem:md-poly-compl}, we have that if \ph 
is a ``Yes'' instance, then there is a
\eps-WSNE in 
$\widetilde{G}$ in which no player places probability on Out, and in which the
amount of probability on each pair of duplicate strategies is split evenly, i.e., 
probability one half on each, and $|\supp(s_j)| = 2$ for every player $j$ of the game.
If \ph is a ``No'' instance, then for every
\eps-WSNE 
of $\widetilde{G}$ every player places all probability on the strategy Out, so 
$|\supp(s_j)| = 1$ for every player $j$ of the game.

These properties then imply that our problems are \NP-complete.
\begin{description}
\item[Problem~\ref{probc:maxprob}] is \NP-complete when we set $p=\frac{1}{2}$.

\item[Problem~\ref{probc:totalmaxsupport}] is \NP-complete when we set $k=2(n+m)$, where
$n$ and $m$ are the number of variables and clauses in \ph respectively.
If \ph is a ``Yes'' instance, then there is an \eps-WSNE where the support sizes
of the played strategies sum up to $k$. If on the other hand the instance does not have a
solution, then in the unique \eps-WSNE of the game the support sizes sum up to
$m+n < k$.

\item[Problem~\ref{probc:minmaxsupport}] is \NP-complete when we set $k=2$.

\item[Problem~\ref{probc:maxsupport}] is \NP-complete when we set $k=2$.
\end{description}
\qed
\end{proof}

\section{Proof of Lemma~\ref{lem:alg}}
\label{app:alg}

In this section, we fully describe the dynamic programming algorithm that we
referred to in Section~\ref{sec:algbase}. The algorithm will proceed in two
phases. Phase 1 will compute a set of candidates for each node in the tree
decomposition, by starting at the leafs and working upwards. Phase 2 will then
use these sets to compute an approximate Nash equilibrium of the game.

\paragraph{\bf Nice tree decompositions.}

It has been shown that we can restrict ourselves to only considering \emph{nice}
tree decompositions~\cite{Kloks}. A tree decomposition $(\xcal,T)$ is nice if
the following conditions are satisfied:
\begin{enumerate}
\item every node of $T$ has at most two children,
\item if a node $i$ has two children $j$ and $k$, then $X_i = X_j = X_k$,
\item if a node $i$ has one child $j$, then
\begin{itemize}
\item either $|X_i| = |X_j| + 1$ and $X_j \subseteq X_i$
\item or $|X_i| = |X_j| - 1$ and $X_i \subseteq X_j$.
\end{itemize}
\end{enumerate}
In a nice tree decomposition, $T$ is a rooted binary tree, and 
each node $i$ has one of the following types.
\begin{itemize}
\item \textbf{Start}. The node $i$ is a leaf in $T$.
\item \textbf{Join}. The node $i$ has exactly two children $j$ and $k$ in $T$,
and $X_i = X_j = X_k$.
\item \textbf{Introduce}. The node $i$ has exactly one child $j$ in $T$ and 
$|X_i| = |X_j| + 1$. That is, exactly one new vertex is added as we move from
$j$ to $i$.
\item \textbf{Forget}. The node $i$ has exactly one child $j$ in $T$ and 
$|X_i| = |X_j| - 1$. That is, exactly one vertex is removed as we move from $j$
to $i$.
\end{itemize}
It is a well known fact (see, e.g.,~\cite{Kloks}) that if we have a tree
decomposition of width $w$ for a given graph, then we can construct a nice tree
decomposition of width $w$ in polynomial time. Furthermore, if the original tree
decomposition had $n$ nodes, then the nice tree decomposition will have at most
$4n$ nodes. When we are working with nice tree decompositions, we can assume
that if~$r$ is the root of the tree, then $X_r = \emptyset$, since we can keep
adding forget nodes to make this the case. We will assume from now on that our
tree decomposition is nice.

\paragraph{\bf Phase 1.}
The first phase of the algorithm will compute a set of $k$-candidates $C(v)$
for each tree decomposition node $v$. We will later prove that $C(v)$ is in
fact a set of witnesses. Since we work
with nice tree decompositions, we only need to consider four types of nodes: 

\paragraph{Start.} 

If $v$ is a start node, then we produce
$C(v)$ by listing all possible $k$-uniform strategy profiles for the players in 
$X_v$, and for each combination producing a candidate $(\sbf, \payv)$ where
$\sbf$ plays the strategies, and $\payv$ assigns payoff $0$ to every pure strategy.
This can be done in $O(m^{kw})$ time.

\paragraph{Introduce.} 

If $v$ is an introduce node, then let $u$ be the child of
$v$ in $T$. We consider every $k$-uniform strategy $s_i$ of player $i$, and
every $(\sbf, \payv) \in C(u)$, and for each pair we produce $(\sbf', \payv')$
where: \begin{itemize} \item $\sbf'_i = s_i$, and $\sbf'_j = \sbf_j$ for all $j
\ne i$, \item $\payv'_i$ is the 0 vector and $\payv_j = \payv_j$ for $j \ne i$,
\end{itemize} and we add $(\sbf', \payv')$ to $C(v)$. This operation takes
$O(m^{k} \cdot |C(u)|)$ time.

\paragraph{Forget.}

If $v$ is a forget node, then let $u$ be the child of $v$ in $T$. Let $i$ be the
player who is removed by $v$. 
For each $(\sbf, \payv) \in C(u)$ we perform the following
operation. First we check that player $i$ is $\eps$-happy, which is true if
the following inequality holds:
\begin{align}
\label{eqn:forget2}
s_i \cdot \Bigg( \payv_i + \sum_{\substack{ j \in X_{v}\cap N(i)  \\ s_j \in \sbf}}
A_{ij} \cdot s_j \Bigg) \geq \max \Bigg( \payv_i + 
\sum_{\substack{ j \in X_{v}\cap N(i)  \\ s_j \in \sbf}} A_{ij} \cdot s_j \Bigg) - \eps.
\end{align} 
This inequality simply checks whether player $i$ is $\eps$-happy, using the data
about player $i$'s payoff stored in $\payv$, and the strategies chosen by the
other players in $X_v$, using the data stored in $\sbf$. 

If the test is passed, then we create $(\sbf', \payv')$ as follows:
\begin{itemize}
\item For all $j \ne i$ we set $\sbf'(j) = \sbf(j)$.
\item For all $j \ne i$ we obtain $\payv'(j)$ by computing $\payv(j) + A_{ji}
\cdot s_i$, and then rounding it to the closest element of $P$.
\end{itemize}
This operation discards the strategy and payoff vector for player $i$, and
updates the payoff vectors for the players in $X_v$ using the strategy played by
player $i$ in $\sbf$. We then add $(\sbf', \payv')$ to $C(v)$. This
operation takes $O(|C(u)|)$ time.

\paragraph{Join.}
Finally, if $v$ is a join node, then let $u_1$ and $u_2$ be the two children of
$v$ in $T$. We consider every pair of candidates $(\sbf_{u_1}, \payv_{u_1}) \in
C(u_1)$ and $(\sbf_{u_2}, \payv_{u_2}) \in C(u_2)$. For each pair, we first
check that $\sbf_{u_1}$ and $\sbf_{u_2}$ are the same, and if so we create
$(\sbf', \payv')$ as follows:
\begin{itemize}
\item $\sbf'(i) = \sbf_{u_1}(i) = \sbf_{u_2}(i)$ for all $i \in X_v$.
\item $\payv'(i) = \payv_{u_1}(i) + \payv_{u_2}(i)$ for all $i \in X_v$.
\end{itemize}
This operation copies the strategy profile, and adds the two payoff vectors.
Note that no rounding is required, since $\payv_{u_1}(i)$ and $\payv_{u_2}(i)$
are both already rounded. This operation takes $O(|C(u_1)| \cdot
|C(u_2)|)$ time.

\paragraph{\bf Running time.}

As we have argued, the total number of candidates for each tree decomposition
node can be at most $m^{kw} \cdot (\frac{n}{\eps})^{mw}$. Since there are $O(n)$
nodes in our tree decomposition, the overall running time of the algorithm is
$O(n \cdot m^{2kw} \cdot (\frac{n}{\eps})^{2mw})$.

\paragraph{\bf Correctness.}

We first prove that every candidate computed by the algorithm is a witness.
For each node $v$ in the tree decomposition, let $f(v)$ be the total number of
forget nodes over all paths from $v$ to a leaf (including $v$ itself if it is a
forget node).

\begin{lemma}
\label{lem:soundness}
For every node $v$ in the tree decomposition, let $r_v = \frac{\eps \cdot
f(v)}{4n}$ and $\eps_v = \eps + 2\cdot r_v$. Every candidate $(\sbf, \payv) \in
C(v)$ is an $\eps_v,k,r_v$-witness for $v$.
\end{lemma}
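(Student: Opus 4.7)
The plan is to prove the lemma by bottom-up induction on the nice tree decomposition, verifying the witness property at each of the four node types in turn. Under the accounting $f(\text{start})=0$, $f(\text{introduce})=f(\text{child})$, $f(\text{forget})=f(\text{child})+1$, and $f(\text{join})=f(u_1)+f(u_2)$, the inductive claim at $(\sbf,\payv)\in C(v)$ is that there is a strategy profile $\sbf'$ for $G(X_v)$ extending $\sbf$ on $X_v$, such that every player in $\tilg(X_v)$ is $\eps_v$-happy and each stored $\payv_i$ approximates $\sum_{j\in \tilg(X_v)\cap N(i)} A_{ij}\sbf'_j$ within $r_v$ in sup-norm.

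The start case is immediate: $\tilg(X_v)$ is empty, $f(v)=0$, and the all-zero payoff vectors are trivially exact. At an introduce node $v$ with child $u$, the tree-decomposition axioms guarantee that the newly added player $i$ has no neighbor inside $\tilg(X_v)=\tilg(X_u)$ (because any such neighbor would force $i$ to appear both above and below $v$, contradicting connectedness of its occurrences), so $\payv'_i=\mathbf{0}$ is exact and the inductive hypothesis transfers with unchanged parameters. For a join node $v$ with children $u_1,u_2$, the same axiom forces $\tilg(X_{u_1})$ and $\tilg(X_{u_2})$ to be vertex-disjoint with no edges between them; the two witness profiles $\sbf'_{u_1},\sbf'_{u_2}$, which agree on $X_v$ by the check $\sbf_{u_1}=\sbf_{u_2}$, glue into a valid profile for $G(X_v)$ and preserve every previously-happy player. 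Additivity along this disjoint union makes the coordinate-wise sum $\payv_{u_1}(i)+\payv_{u_2}(i)$ match the true two-part contribution exactly, so errors simply add to $r_{u_1}+r_{u_2}=r_v$, matching the recursion on $f$.

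The forget node is the core of the argument and the only place where approximation error enters. Let $i$ be the player dropped from $X_u$. The tree-decomposition axioms again imply every neighbor of $i$ lies in $G(X_u)=G(X_v)$, so inequality~\eqref{eqn:forget2}---which combines the stored $\payv_u(i)$ with the exact $X_u$-contribution assembled from $\sbf_u$---accounts for $i$'s complete pure-strategy payoff vector. Since $\payv_u(i)$ is accurate up to $r_u$ by induction, passing the check at threshold $\eps$ certifies that $i$ is really $(\eps+2r_u)$-happy in the true game: at most $r_u$ slack on the claimed expected payoff and at most $r_u$ slack on the claimed best-response payoff. For each $j\in X_v$, the update $\payv_u(j)+A_{ji}s_i$ is exact, and the subsequent rounding to $P$ costs at most $\tfrac{\eps}{4n}$ per coordinate, so the new error is $r_u+\tfrac{\eps}{4n}=r_v$ and the happiness slack satisfies $\eps+2r_u\le\eps_v$; meanwhile players already in $\tilg(X_u)$ inherit $\eps_u$-happiness, which is at most $\eps_v$-happiness since the profile is unchanged. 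I expect the main obstacle to be the careful invocation of the two graph-theoretic invariants -- ``all neighbors of a forgotten player sit in $G(X_u)$'' and ``no edges cross between the two subtrees at a join'' -- since these are where the content of the tree decomposition actually enters, and the whole error-tracking argument collapses without them.
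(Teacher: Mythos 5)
Your proof is correct and follows essentially the same bottom-up induction as the paper: same base case at start nodes, same observations that the introduced player has no neighbors in $\tilg(X_v)$ and that the join glues two edge-disjoint subgames so errors add, and the same error-tracking at forget nodes with inequality~\eqref{eqn:forget2} giving $(\eps+2r_u)$-happiness and the rounding step contributing the extra $\eps/4n$. You make the underlying graph-theoretic invariants (tree-decomposition separation at introduce/forget/join) and the $f$-accounting slightly more explicit than the paper does, but the argument is the same.
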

\begin{proof}
Our proof will be by induction over the nodes in the tree decomposition. For the
base case, we consider the case where $v$ is a start node. In this case, observe
that the game $\tilg(X_v)$ is empty, so a $k$-candidate $(\sbf, \payv)$ is an
$\eps_v,k,r_v$-witness if and only if $\sbf$ contains only $k$-uniform
strategies, and $\payv$ contains only $0$ vectors. Since $C(v)$ only contains
candidates that satisfy these criteria, every member of $C(v)$ is an
$\eps_v,k,r_v$-witness.

For the inductive step, there are three possibilities, depending on the type of
	$v$. In each of these cases, we assume, as inductive hypothesis, that we
	have proved that every member of $C(u)$ is an $\eps_u,k,r_u$-witness for all
	children $u$ of $v$.


\paragraph{Introduce nodes.} Let $v$ be an introduce node, and let $(\sbf,
\payv) \in C(u)$ be a witness for $u$. We will show that the candidate $(\sbf',
\payv')$ created by our algorithm is an $\eps_u,k,r_u$-witness for $v$. Observe
that, since $X_u$ is a separator in $G$, player $i$ has no edges to any player
in $\tilg(X_v)$. This means that the strategy played by player $i$ cannot affect
whether any player in $\tilg(X_v)$ is happy. Therefore, we can use the inductive
hypothesis to show that every player in $\tilg(X_v)$ is $\eps_u$-happy, and then
observe that $\eps_u = \eps_v$. Moreover, player $i$ cannot obtain any payoff
from the players in $\tilg(X_v)$, and so setting $\payv_i$ to be the all-zero vector
is correct. Thus, $(\sbf', \payv')$  is an $\eps_v,k,r_v$-witness for $v$.

\paragraph{Forget nodes.} We now consider the case where $v$ is a forget node.
Let $(\sbf, \payv)$
be a member of $C(u)$ that passes the test in
Inequality~\eqref{eqn:forget2}. Let $\sbf_f$ be the strategy profile for
$\tilg(X_u)$ in which every player is $\eps_u$-happy, whose existence is
witnessed by $(\sbf, \payv)$. We argue that if we add $s_i \in \sbf$ to $\sbf_f$,
then we obtain a strategy profile for $\tilg(X_v)$ in which every player is
$\eps_v$ happy:
\begin{itemize}
\item Every player other than $i$ continues to be $\eps_u$-happy, because by
definition $\sbf_f$ agrees with $s_i$. Since $\eps_u < \eps_v$, these players are
also $\eps_v$-happy.
\item On the other hand, we must explicitly prove that player $i$ is $\eps_v$-happy. By the inductive hypothesis, the payoffs stored in $\payv$ are within
$r_v$ of the true payoff to player $i$ under $\sbf_f$.
Thus, in the worst case, Inequality~\eqref{eqn:forget2} ensures that:
\begin{equation*}
s_i \cdot \Bigg( 
\sum_{\substack{j \in G(X_v)\cap N(i) \\ s_j \in \sbf_f}} A_{ij} \cdot s_j 
\Bigg) + r_v \geq \max \Bigg( \payv_i + 
\sum_{\substack{j \in G(X_v)\cap N(i) \\ s_j \in \sbf_f}} A_{ij} \cdot s_j \Bigg) - r_v - \eps.
\end{equation*}
This implies that player $i$ is $(\eps + 2 r_v)$-happy, as required.
\end{itemize}

Next we argue that $\payv'$, the new payoff vector constructed by our
algorithm, is correct. Consider a player $j \in X_v$. By the inductive
hypothesis, $\payv_j$ gives the payoff to $j$ from the players in $\tilg(X_u)$
with an additive error of $r_u$. We add $A_{ji} \cdot s_i$ to this vector,
obtaining the payoff to $j$ from the players in $\tilg(X_v)$ with an additive
error of $r_u$. We then round to the closest element of $P$, which
adds an additional error of at most $\eps/4n$. Since $r_v = r_u + \eps/4n$, this
operation is correct. Thus, we have shown that $(\sbf', \payv')$ is an
$\eps_v,k,r_v$-witness for $v$.

\paragraph{Join nodes.}
Finally, we consider the case where $v$ is a join node. Let $(\sbf_{u_1},
\payv_{u_1}) \in C(u_1)$ and $(\sbf_{u_2}, \payv_{u_2}) \in C(u_2)$ be a pair of
candidates. Observe that, since $X_v$ separates $\tilg(X_{u_1})$ and
$\tilg(X_{u_2})$, no player in $\tilg(X_{u_1})$ can influence a player in
$\tilg(X_{u_2})$, and vice versa. Therefore, when we merge these two witnesses
in $(\sbf', \payv')$, we do not affect whether any player in $\tilg(X_v)$ is
$\eps$-happy. Furthermore, when we add $\payv_{u_1}(i)$, which has an additive
error of $r_{u_1}$, to $\payv_{u_2}(i)$, which has an additive error of
$r_{u_2}$, then we obtain a vector that has an additive error of $r_{u_1} +
r_{u_2}$. Since $r_v$ depends on the total number of forget nodes in both
subtrees of $v$, the resulting payoff vector has an additive error of $r_v$. So,
we have shown that $(\sbf', \payv')$ is an $\eps_v,k,r_v$-witness for $v$.\qed
\end{proof}

In the other direction we must also show that the algorithm does not throw away
too many witnesses. We do this in the following lemma.

\begin{lemma}
\label{lem:completeness}
If $\sbf$ is a $k$-uniform $\eps/4$-NE then $C(v)$ will contain a witness
$(\sbf', \payv)$ such that $\sbf'$ agrees with $\sbf$ for the vertex in $X_v$.
\end{lemma}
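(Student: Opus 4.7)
The plan is to prove this by induction on the tree decomposition, mirroring the structure of Lemma~\ref{lem:soundness} but in the reverse direction. I would strengthen the inductive invariant slightly: for each node $v$, $C(v)$ will contain a witness $(\sbf', \payv)$ such that $\sbf'$ agrees with $\sbf$ on $X_v$ and, in addition, every entry of each $\payv_i$ lies within $r_v$ of the true payoff that player $i \in X_v$ obtains from $\tilg(X_v)$ under $\sbf$. This payoff-approximation clause is not part of the witness definition per se, but it is needed to make the forget step go through.

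Three of the four cases require only bookkeeping. At a start node, the algorithm enumerates every $k$-uniform profile on $X_v$ paired with the all-zero payoff vector; the restriction of $\sbf$ to $X_v$ is among those enumerated because $\sbf$ is $k$-uniform, and the all-zero vector is correct because $\tilg(X_v) = \emptyset$. At an introduce node adding vertex $i$, the algorithm pairs every witness in $C(u)$ with every $k$-uniform strategy for $i$, so $\sbf_i$ is available; setting $\payv_i = 0$ is correct because the coherence property of tree decompositions forces every neighbour of $i$ lying in $G(X_v)$ to also lie in $X_v$, so $N(i) \cap \tilg(X_v) = \emptyset$. At a join node, the two inductive witnesses share the same strategy component on $X_v$ and are merged; summing the payoff vectors gives the correct total within error $r_{u_1} + r_{u_2} = r_v$, since $\tilg(X_{u_1})$ and $\tilg(X_{u_2})$ partition $\tilg(X_v)$.

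The main obstacle will be the forget node $v$ with child $u$ and forgotten player $i$, where I must show that the inductive witness $(\sbf^\circ, \payv) \in C(u)$ passes the test in Inequality~\eqref{eqn:forget2}. The key structural fact is that $N(i) \subseteq G(X_u)$: the nodes of the tree decomposition containing $i$ form a connected subtree, and since $i$ is forgotten at $v$, that subtree lies entirely within the subtree rooted at $u$, so every neighbour of $i$ is accounted for inside $G(X_u)$. Consequently, $i$'s regret under $\sbf$ within $G(X_u)$ coincides with its global regret, which is at most $\eps/4$ by assumption. Because $\payv_i$ approximates the contribution from $\tilg(X_u)$ within $r_u$ per coordinate, the gap between the left- and right-hand sides of the forget test, computed from the approximate data, exceeds the true gap by at most $2 r_u$. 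This yields a computed regret of at most $\eps/4 + 2 r_u \le 3\eps/4 \le \eps$, so the test succeeds. The subsequent rounding of each updated $\payv_j + A_{ji} \cdot \sbf^\circ_i$ introduces at most $\eps/(4n)$ further error, matching the recurrence $r_v = r_u + \eps/(4n)$ and preserving the strengthened invariant. The slack of $\eps$ (rather than $\eps/4$) built into the forget test is precisely what is needed here, to absorb both the NE regret and the accumulated rounding error.
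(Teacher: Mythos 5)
Your proposal is correct and follows essentially the same approach as the paper: you strengthen the induction hypothesis by adding the payoff-approximation clause $\lVert \payv_i - \tbf_v \rVert \le r_v$, handle the four node types identically, and account for error in the forget step via the same $\eps/4 + 2r_u \le \eps$ arithmetic before absorbing the $\eps/(4n)$ rounding into $r_v = r_u + \eps/(4n)$. The only difference is that you make explicit a structural fact the paper uses implicitly (that $N(i) \subseteq G(X_u)$ when $i$ is forgotten at $v$, via the connected-subtree property), which is a welcome clarification rather than a deviation.
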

\begin{proof}
We will show, by induction,
the stronger property that every tree node $v$ will have a witness $(\sbf, \payv)$ where:
\begin{itemize}
\item For every player $i \in X_v$ we have 
$s_i \in \sbf$ is the same as the strategy assigned to player $i$ by $\sbfne$.
\item The payoffs in $\payv$ are within $r_v = \frac{\eps \cdot f(v)}{4n}$ of
the true values given by $\tbf_v = \sum_{j \in \tilg(X_v)} A_{ij} \cdot (\sbfne)_j$.
\end{itemize}
For the base case, where $v$ is a start node, observe that by assumption $s$
only uses $k$-uniform strategies, and so $C(v)$ will contain a candidate
satisfying the required properties. For the inductive step, we again have three
possibilities, depending on the type of $v$.

If $v$ is a introduce node, then let $(\sbf, \payv)$ be the candidate for
$u$ whose existence is implied by the inductive hypothesis. Let $(\sbf',
\payv')$ be the witness that the algorithm creates using 
$(\sbf, \payv)$ and $(\sbfne)_i$. Since $X_u$ separates $i$ from every player in
$\tilg(X_v)$, we have that $\tbf_u = \tbf_v$ for all players $i \ne j$, and
$\tbf_v$ assigns the $0$ vector to player $i$. 
Since, by the inductive hypothesis, we have that $\payv$ close to $\tbf_u$ with
an additive error of at most $r_u = r_v$, the witness $(\sbf', \payv')$ created
by our algorithm does indeed satisfy the required properties.

If $v$ is a forget node, then let $(\sbf, \payv)$ be the candidate for $u$
whose existence is implied by the inductive hypothesis. We begin by arguing that
$(\sbf, \payv)$ is not discarded by the check from
Inequality~\eqref{eqn:forget2}.  Since $\sbfne$ is a
$\frac{\eps}{4}$-NE we have:
\begin{equation*}
s_i \cdot \Bigg( \tbf_i + \sum_{\substack{j \in X_{v}\cap N(i)\\ s_j \in \sbf}}
A_{ij} \cdot s_j \Bigg) \geq \max \Bigg( \tbf_i + \sum_{\substack{j \in 
X_{v}\cap N(i)\\ s_j \in \sbf}} A_{ij} \cdot s_j \Bigg) - \frac{\eps}{4}.
\end{equation*}
By the inductive hypothesis, we have that
$\payv$ is within $\tbf_u$ with an additive error of $r_u$. Observe that, since
there are at most $n$ forget nodes in the tree decomposition, we have $r_u  \le n
\cdot \frac{\eps \cdot f(v)}{4n} \le \frac{\eps}{4}$. Hence, we have:
\begin{equation*}
s_i \cdot \Bigg( \payv_i + \sum_{\substack{j \in X_{v}\cap N(i)\\ s_j \in \sbf}}
A_{ij} \cdot s_j \Bigg) \geq \max \Bigg( \payv_i + \sum_{\substack{ \in 
X_{v}\cap N(i)\\ s_j \in \sbf}} A_{ij} \cdot s_j \Bigg) - \frac{\eps}{4} - 2 \cdot
\frac{\eps}{4}.
\end{equation*}
Thus, 
Inequality~\eqref{eqn:forget2} is satisfied and so 
$(\sbf, \payv)$ will not be discarded.

We must still show that the vector $\payv'$ constructed by the algorithm
satisfies the requirements. Again, by the inductive hypothesis we have
that $\payv$ is within $\tbf_u$ with an additive error of $r_u$. We also have
that $(\tbf_v)_j = (\tbf_u)_j + A_{ji} \cdot s_i$ for all players $j$. Our
algorithm does the same operation, but we must account for the extra rounding
step, which can add an additional error of at most $\eps/4n$. Since $r_v = r_u +
\eps/4n$, the resulting candidate satisfies the required properties.

If $v$ is a join node, then let $(\sbf_{u1}, \payv_{u1})$ and
$(\sbf_{u2}, \payv_{u2})$ be the two candidates whose existence is implied by
the inductive hypothesis. We argue that the candidate $(\sbf', \payv')$ produced
by the algorithm from these two candidates will satisfy the requirements. The
first property is obviously true. For the property regarding $\payv'$, observe
that we add a payoff vector with additive error $r_{u1}$ to a payoff vector with
additive error $r_{u2}$, and so the resulting error will be $r_{u1} + r_{u2}$.
Since $f(v) = f(u_1) + f(u_2)$, the required property is satisfied.\qed
\end{proof}

Let $r$ be the root node of $T$. As we explained, we can assume that $X_r$ is
the empty set, and so contains no players. Thus, $C(r)$ is either empty, or it
contains a single witness (a pair of empty sets). The previous lemma implies
that if the game contains a k-uniform $\eps/4$-NE, then $C(r)$ will not be
empty, which will allow us to proceed to phase 2.



\paragraph{\bf Phase 2.}

In phase 2, we assume that $C(r)$ contains a candidate, and we will use this
candidate to find an approximate NE for the game. By Lemma~\ref{lem:soundness}
we know that if the $C(r)$ contains a candidate, then that candidate 
witnesses the existence of an $(\eps + 2r_r)$-NE, where $r_r = \frac{\eps \cdot
f(r)}{4n}$. Since $f(r) \le n$ we have that there exists an $1.5\eps$-NE in the
game, and the goal of phase 2 is to find one.

The algorithm is straightforward. It walks down all branches of the tree
starting at the root. In each step, it has a candidate for the current vertex,
and needs to find corresponding candidates for each child of that node. More
precisely, if $v$
is a node and $(\sbf, \payv)$ is the candidate that has been chosen for that
node (where the sole element of $C(r)$ is chosen for the root initially), then
the algorithm does the following:
\begin{itemize}
\item If $v$ is a join node with two children $u_1$ and $u_2$, then the
algorithm finds the candidates $(\sbf_{u_1}, \payv_{u_1}) \in C(u_1)$ and
candidates $(\sbf_{u_2}, \payv_{u_2}) \in C(u_2)$ that were used to generate
$(\sbf, \payv)$ in phase $1$, and then continues recursively on both $u_1$ and $u_2$.
\item $v$ is a forget node or an introduce node, and $u$ is the child of $v$
then the algorithm finds the candidate $(\sbf', \payv')$ that was used to
generate $(\sbf, \payv)$ and continues recursively on $u$. If $v$ is a forget
node, and player $i$ is forgotten at $v$, then once we have found $(\sbf',
\payv')$, we assign the strategy $\sbf'_i$ to player $i$.
\item If $v$ is a start node, then algorithm stops.
\end{itemize}
Each of these steps can be carried out by re-running the phase 1
algorithm at each node to determine how $(\sbf, \payv)$ was created, so the
running time of phase 2 is no more than the running time of phase 1. 

After the algorithm terminates each player has been assigned a strategy, and we
claim that the resulting strategy profile $s$ is an $1.5\eps$-NE. This follows
from the fact that when we assign strategy $s_i$ to player $i$ in the forget
node $v$, the test from Inequality~\eqref{eqn:forget2} ensures that player $i$
is $\eps$-happy with respect to the payoff vector $\payv$, and the algorithm
subsequently constructs a strategy profile whose payoffs are within $r_v <
\frac{\eps}{2}$ of $\payv$, so player $i$ is $1.5\eps$-happy in the resulting
strategy profile.

\section{Proof of Theorem~\ref{thm:consk}}
\label{app:consk}

With reference to the algorithm defined in Appendix~\ref{app:alg}, we make the
following modifications.
\begin{itemize}
\item At every Start node, the value of $x$ is initialized to
$g(\sbf_\emptyset)$, where $\sbf_\emptyset$ denotes the strategy profile of an
empty game.
\item When we create a witness for an Introduce node $v$ with child $u$, we copy the value for $x$ given in the witness for $u$.
\item When we create a witness for a Forget node $v$ with child $u$, we set the
value of $x$ by computing
$\addf(\tilg(X_u), i, \sbf_i, x)$, where $i$ is the player that is being forgotten, $\sbf_i$
is the strategy for $i$, and $x$ is the value from the witness for $u$.
\item When we create a witness for a Join node $v$ with children $u_1$ and
$u_2$, we set the value of $x$ by computing $\mergef(\tilg(X_{u_1}), \tilg(X_{u_2}), x_1, x_2)$, where
$x_1$ and $x_2$ are the values from the witnesses for $u_1$ and $u_2$,
respectively.
\end{itemize}
Observe that the definition of the functions $\addf$ and $\mergef$ ensure that
$x$ does indeed capture the value that $g$ would assign to the witnessed
strategy profile on $\tilg(X_v)$.

If at any point two witnesses are created that agree on the $\sbf$ and $\payv$
components, but disagree on the $x$ component, then we keep the witness whose
$x$ value is higher, and discard the other one. This ensures that the number of
possible witnesses computed for each tree decomposition node does not increase
relative to the original algorithm, so the running time of the algorithm remains
unchanged (ignoring the extra polynomial factors needed to compute $g$, $\addf$,
and $\mergef$ during the algorithm.)

Phase two of the algorithm is unchanged. We simply select the (unique) witness
$(\sbf, \payv, x) \in C(r)$ and unroll it to obtain a strategy profile $\sbf'$
for $G$. The properties of $\addf$ and $\mergef$ ensure that $g(\sbf') = x$.

We can now proceed to prove Theorem~\ref{thm:consk}.
\begin{proof}
Since the modifications to the algorithm only deal with the extra parameter $x$,
every witness $(\sbf, \payv, x)$ computed by the algorithm corresponds to a
witness $(\sbf, \payv)$ that would have been computed by the original algorithm.
Hence, the fact that the algorithm finds an $1.5\eps$-NE follows immediately
from Theorem~\ref{thm:twmain}.

For the second property, we will use Lemma~\ref{lem:completeness}. This lemma
proves that for every $\eps/4$-NE $\sbf'$ of the game, every tree decomposition
node will have a witness that corresponds to $\sbf'$. The value of $x$ for this
witness will correspond to the value that $g$ would assign to the restriction of
$\sbf'$ to $\tilg(X_v)$. Since the algorithm always discards witnesses with
smaller values of $x$, this ensures that the approximate equilibrium $\sbf$ must
have $g(\sbf) \ge g(\sbf')$. 
\end{proof}

\section{Proof of Theorem~\ref{thm:nouniform}}

\begin{proof}
The proof of this theorem requires a deeper understanding of the techniques 
used to prove Theorem~\ref{thm:small}. They show that, when $k$ is chosen according
to the required bound, one can take any strategy profile $\sbf'$ and randomly sample a
$k$-uniform strategy profile $\sbf$. The crucial property is that, with high
probability, the payoff of each pure strategy under $\sbf'$ are close to
the payoff of the corresponding pure strategy under $\sbf$. In particular,
the value of $k$ used in our algorithm ensures that the payoffs can move by at
most $\eps/16$. 
This, in turn, means that any linear function over the payoffs can change by at
most $c \eps/16$, for some constant $c$, which gives the required inequality.
\qed
\end{proof}

\end{document}